\documentclass{article}
\usepackage[utf8]{inputenc}

% add language support, you can type now in German, French and English without problems.
\usepackage[german,french,english]{babel}
\usepackage[T1]{fontenc}

\usepackage{fullpage}

% larger margins
%\usepackage[margin=4cm]{geometry}

%%%%%%%%%%%%%%%%%%%%%%%%%%%%%%%%%%%%%%%%%%%%%%%%%%%%%%%%%%%%%%%%%
% Mathematics
%%%%%%%%%%%%%%%%%%%%%%%%%%%%%%%%%%%%%%%%%%%%%%%%%%%%%%%%%%%%%%%%

% mathematics packages for additional maths support
\usepackage{amsmath}
% provides extended symbol collection
\usepackage{amssymb}
%\usepackage[bb=libus]{mathalpha}
% used for the creation of theorem styles etc
\usepackage{amsthm}
\usepackage{stmaryrd}
%\usepackage{relsize}

% for quotient rings, etc. used as \faktor{\nZ}{\left< 2 \right>}
\usepackage{faktor}

\newtheoremstyle{itstyle}
  {1.5\topsep}{1.5\topsep}                                          % space above, space below
  {\itshape}{}{\bfseries}                               % body font, indent (<\parindent>), head font
  {.}{.5em}                                  % punctuation after head, space after head
  {\thmname{#1} \thmnumber{#2}\thmnote{ (#3)}}  % head spec
\newtheoremstyle{notitstyle}
  {1.5\topsep}{1.5\topsep}                                          % space above, space below
  {}{}{\bfseries}                               % body font, indent (<\parindent>), head font
  {.}{.5em}                                  % punctuation after head, space after head
  {\thmname{#1} \thmnumber{#2}\thmnote{ (#3)}}  % head spec

\theoremstyle{itstyle}%margin}%itstyle}
\newtheorem{thm}{Theorem}[section]
\newtheorem{lem}[thm]{Lemma}
\newtheorem{prop}[thm]{Proposition}
\newtheorem{cor}[thm]{Corollary}

\theoremstyle{notitstyle}
\newtheorem{defn}[thm]{Definition}

\newtheorem{rmk}[thm]{Remark}

% Raph Smith's Formal Script, provides \mathscr{}
% compare F, $\mathbb{F},\, \mathcal{F},\, \mathscr{F},\, \mathfrak{F}$
\usepackage{mathrsfs}

% used to draw graphs
\usepackage{pgfplots}
\pgfplotsset{compat=1.5}

% the cancel line e.g. used for simplifying fractions
\usepackage[makeroom]{cancel}
% used for includegraphics and the like
\usepackage{graphicx}

% typeset diagrams
\usepackage{tikz}
\usetikzlibrary{decorations.pathmorphing,positioning,arrows,calc,chains,fit,shapes,automata,decorations.pathreplacing,calligraphy}

%%%%%%%%%%%%%%%%%%%%%%%%%%%%%%%%%%%%%%%%%%%%%%%
% including  commutative diagrams
%
% \arrow[<options>]{<direction>}<list of [options][text]>
%
% \begin{tikzcd}
% A \arrow[two heads, tail]{drr}[near end]{\phi}[near start]{\psi} \arrow{r} & B \arrow{r} \arrow[crossing over]{d}& C \arrow{d} \
%             & E & D
% \end{tikzcd}
%
% edge options:
% bend left/right, dotted, hook, hook', dashed, leftrightarrow,
% Leftrightarrow, Rightarrow, Leftarrow, mapsto, Mapsto, mapsfrom,
% Mapsfrom, tail, squiggly, two heads, crossing over, shift right/left=
%
%%%%%%%%%%%%%%%%%%%%%%%%%%%%%%%%%%%%%%%%%%%%%%%%
\usepackage{tikz-cd}
% put a symbol instead of an arrow
% A \arrow[symbol="\phi"]{r} & B
\tikzset{%
    symbol/.style={%
        draw=none,
        every to/.append style={%
            edge node={node [sloped, allow upside down, auto=false]{$#1$}}}
    }
}
%%%%%%%%%%%%%%%%%%%%%%%%%%%%%%%%%%%%%%%%%%%%%%%%
% tikz configuration for nice modal diagrams
%
% \node[world] (<name>) [label=<<degree>,[left|right|above|below]+>:{<label>},<[left|right|above|below]+>=of <world name>] {<label inside node>};
% \path[<[-,->,<->]+>] (<start world name>) edge[<reflexive [left,right,above,below]>] node[<[left,right,above,below]+>]{<edge label>} (<end world name>);
%
% \begin{tikzpicture}[modal]
% \node[world] (w) [label=60:{w}] {};
% \node[world] (v) [below left=of w] {v};
% \path[->] (w) edge node[above]{"w to v"} (v);
% \path[<->] (w) edge[reflexive left] (w);
% \end{tikzpicture}
%%%%%%%%%%%%%%%%%%%%%%%%%%%%%%%%%%%%%%%%%%%%%%%%
\tikzset{
modal/.style={
shorten >=1pt,
shorten <=1pt,
auto,
node distance=1.5cm,
semithick
},
world/.style={circle,draw,minimum size=0.5cm,fill=gray!15},
point/.style={circle,draw,inner sep=0.5mm,fill=black},
reflexive above/.style={->,loop,looseness=7,in=120,out=60},
reflexive below/.style={->,loop,looseness=7,in=240,out=300},
reflexive left/.style={->,loop,looseness=7,in=150,out=210},
reflexive right/.style={->,loop,looseness=7,in=30,out=330}
}

\usepackage{soul} % for highlighting, underlining, ...
% underline: \ul{}
% strike through: \st{}

\usepackage{sidenotes} % for making sidenotes i.e. \sidenote{something}

% can be used for comments i.e. \begin{comment}\end{comment}
% is also provides \begin{verbatim}\end{verbatim}
\usepackage{verbatim}

% allows for the use of \href{}{}, \hyperlink{}{} and \url{}{}
\usepackage{hyperref}
\hypersetup{
    colorlinks=true,
    linkcolor=blue,
    filecolor=magenta,      
    urlcolor=cyan,
}

% coslice with epimorphisms only
\newcommand\twoheaddownarrow{\mathrel{\rotatebox[origin=c]{-90}{$\twoheadrightarrow$}}}

% quotienting algebras by congruence

% function restriction
\newcommand\restr[2]{{% we make the whole thing an ordinary symbol
  \left.\kern-\nulldelimiterspace % automatically resize the bar with \right
  #1 % the function
  %\vphantom{\big|} % pretend it's a little taller at normal size
  \right|_{#2} % this is the delimiter
  }}

\usepackage{array}

\newcolumntype{M}[1]{>{\centering\arraybackslash}m{#1}}
\newcolumntype{N}{@{}m{0pt}@{}}

\let\epsilon\varepsilon
\newcommand{\nuC}{\nu\text{C}}
\newcommand{\muPL}{\mu\text{PL}}

\usepackage{tablefootnote}
\usepackage{float}
\usepackage{mathtools}

\title{On Transition Constructions for Automata \\ \Large A Categorical Perspective}
\author{Mike Cruchten}
\date{\today}

\begin{document}

\maketitle

\abstract{We investigate the transition monoid construction for deterministic
automata in a categorical setting and establish it as an adjunction.
We pair this adjunction with two other adjunctions to obtain
two endofunctors on deterministic automata, a comonad and
a monad, which are closely related, respectively, to the largest set of equations and the smallest
set of coequations satisfied by an automaton. Furthermore, we give similar transition algebra constructions for lasso and
$\Omega$-automata, and show that they form adjunctions.
We present some initial results on sets of equations and coequations for
lasso automata.}

\section{Introduction}%

The transition monoid construction is a very well-known construction in
automata theory which creates a direct connection between coalgebraic
and algebraic language theory. It can be used, amongst others, to show that
language acceptance by a deterministic finite automaton, is equivalent to
language recognition by a finite monoid. Being able to use both algebraic and
coalgebraic tools to study automata has allowed this theory to accumulate
a wealth of results.

In recent years, the study of automata has also been done through a
category theoretical perspective. Automata are for instance seen as a key
example of a coalgebra. These developments have brought with them a
different perspective on well-known constructions such as for instance
minimisation procedures on automata. Among these categorical approaches,
some work is dedicated to the study of varieties and covarieties of languages.

In this paper we make a connection between the well known transition monoid
construction and recent work on varieties and covarieties (\cite{ballester:2015:dualEquivalenceOfEquationsAndCoequations,salamanca:2015:equationsAndCoequations}). These classes
of languages are defined via sets of equations and coequations which are
satisfied by the transition structure of an automaton. Of particular
interest is the greatest set of equations and the least set of coequations,
which are in a certain sense free and cofree objects.
Rutten et al.\ give constructions of these free and cofree objects and
show that they are functors over certain categories, and that they form
a dual equivalence between congruence quotients and preformations of languages \cite{ballester:2015:dualEquivalenceOfEquationsAndCoequations}.

We show that the transition monoid construction forms a monotone map between
two posetal categories, and that it forms a right adjoint, whose left adjoint
is another well-known construction which takes a monoid homomorphism
and turns it into a deterministic automaton. We combine this Galois connection with some other adjunctions
from the literature to define two endofunctors on the category of deterministic
automata, $\nuC$ and $\muPL$, which to each automaton associates an automaton
corresponding to the greatest set of equations satisfied by the reachable automaton,
and an automaton corresponding to the least preformation of languages which includes
certain languages that can be obtained from the automaton by varying the initial and final states.

After we have set up this configuration, we repeat parts of our construction for lasso
and $\Omega$-automata (these automata provide a coalgebraic way to talk about
$\omega$-languages).
This work tries to make a first step towards studying, in a categorical setting,
sets of equations and coequations for $\Omega$-automata.

Our contributions are summarised as follows:
\begin{itemize}
  \item We show that the transition monoid construction for deterministic automata is a right adjoint
    between two posetal categories, whose left adjoint is the machine
    construction.
  \item We construct two endofunctors $\nuC$ and $\muPL$, a comonad and a monad, on the category of
    deterministic automata. The first associates to each automaton the largest
    set of equations satisfied by the reachable part of the automaton. The second
    associates to each automaton the least preformation of languages which includes
    certain languages that can be obtained from the automaton by varying the initial
    and final states.
  \item We establish a relationship between $\nuC$, $\muPL$ and
    $\mathsf{free}$, $\mathsf{cofree}$ (as defined by Rutten et al. in \cite{ballester:2015:dualEquivalenceOfEquationsAndCoequations}, which correspond to the greatest set of equations and least set of coequations satisfied by an automaton).
  \item We instantiate these ideas to lasso automata by defining sets of
    equations and coequations (as in \cite{ballester:2015:dualEquivalenceOfEquationsAndCoequations}).
    We establish an adjunction through a transition algebra and machine construction,
    and define $\nuC$ and $\muPL$ for lasso automata, where $\nuC$ is again to be seen
    as the greatest set of equations satisfied by the reachable lasso automaton.
  \item Lastly, we show that the transition Wilke algebra construction for $\Omega$-automata from
    \cite{cruchten:2022:omegaAutomata} is functorial and gives rise to
    an adjunction, which is related to the adjunction we obtain for lasso automata.
\end{itemize}

Regarding the monad $\muPL$, for a deterministic automaton with state space
$X$ and accepting states $c$, $\muPL(X,c)$ is the least preformation of
languages which contains the languages $L(x,c)$ for $x\in X$.

\paragraph{Related Work.} Our work is related to work by Chernev et al.\ on
adjunctions for lasso and $\Omega$-automata, which give rise to a monad
similar to $\muPL$.
Other lines of work which
are related are that on equations and coequations \cite{ballester:2015:dualEquivalenceOfEquationsAndCoequations}
and on minimisation as found in \cite{bonchi:2014:algebraCoalgebra,bezhanishvili:2020:minimisation}.

\section{Preliminaries}%

We fix a finite set of letters $\Sigma$ called the \emph{alphabet}. The set
of all finite words $\Sigma^\ast$ is the free monoid over $\Sigma$ and comes
with identity $\epsilon$ (the empty word) and multiplication which is just given
by concatenation of words. We use letters $u,v,w$ for words. The set
$\Sigma^\omega$ corresponds to all infinite words over $\Sigma$, which formally
can be thought of as functions $\omega\to \Sigma$. The infinite words
of the shape $uv^\omega$, where $v^\omega$ ($v\not= \epsilon$) corresponds to concatenating
$v$ infinitely often with itself, are called \emph{ultimately periodic} and
the set of all such words is written $\Sigma^{\text{up}}$. Concatenation of an infinite word by a finite
word on the left is defined by juxtaposition. A pair $(u,v)\in\Sigma^\ast\times\Sigma^+$
is called a \emph{lasso} and we think of it as a representative of the
ultimately periodic word $uv^\omega$. We write the set $\Sigma^\ast\times\Sigma^+$
of all lassos succinctly as  $\Sigma^{\ast +}$.

We use the letters $U,V,W$ for languages of finite words and the letters
$L,K$ for languages of infinite words and also languages of lassos. As is standard, a language
of finite words is regular if it is accepted by a deterministic finite automaton
(or recognised by a finite monoid). Similarly, an $\omega$-language is regular
if it is accepted by a finite nondeterministic B\"{u}chi automaton (or
recognised by a finite Wilke algebra or $\omega$-semigroup).

\begin{defn}[\cite{calbrix:1994:ultimatelyPeriodicWords,cruchten:2022:omegaAutomata}]
  We define $\sim_\gamma$ as the equivalence relation on lassos which is
  given by
  \[
    (u,v)\sim_\gamma (u',v') \iff uv^\omega = u'v'^\omega
  \] for lassos $(u,v),(u',v')\in \Sigma^{\ast +}$. Two lassos which
  are related by $\sim_\gamma$ are called \emph{$\gamma$-equivalent}.
\end{defn}

We now introduce the objects we are studying. We closely follow naming
conventions from \cite{ballester:2015:dualEquivalenceOfEquationsAndCoequations}.
For the first part of the article
we focus on deterministic automata. Define the following endofunctors on the
category Set:
\begin{align*}
  G(X)&=X\times \Sigma & F(X)&=X^\Sigma \\
  G_1(X)&= 1 + G(X) & F_2(X)&=F(X)\times 2.
\end{align*}
Then an automaton $(X,\overline{x},\delta,c)$ give rise to
\begin{enumerate}
  \item a $G$-algebra $(X,\delta^{\dashv}:X\times \Sigma\to X)$,
  \item an $F$-coalgebra $(X,\delta:X\to X^\Sigma)$,
  \item a $G_1$-algebra $(X,\overline{x},\delta^{\dashv})$ and
  \item an $F_2$-coalgebra $(X,\delta,c)$.
\end{enumerate}
We often don't distinguish between $\delta$ and $\delta^{\dashv}$, similarly
we don't distinguish between $c:X\to 2$ and the corresponding subset
$\{x\in X\mid c(x)=1\} $. Moreover,
note that $G\dashv F$, and that $\text{Alg}(G)\cong \text{CoAlg}(F)$, where
$\text{Alg}(G)$ denotes the category of $G$-algebras and $\text{CoAlg}(F)$ the
category of $F$-coalgebras. The transition function $\delta$ can be extended
to a map of type $X\to X^{\Sigma^\ast}$ in the usual way, and we make use of
this fact without additional notation. We sometimes refer to
$(X,\delta)$ as an automaton, $(X,\overline{x},\delta)$ a pointed automaton,
$(X,\delta,c)$ an accepting automaton and $(X,\overline{x},\delta,c)$ a pointed
and accepting automaton (DA or when clear just automaton). Given such a pointed
and accepting automaton, we denote by $L(X,\overline{x},c)$ the language
accepted by it, often dropping $X,\overline{x}$ or $c$ if they are clear from
context. An automaton is called \emph{reachable} if any state can be reached
from the initial state upon input of some finite word.

Towards the second part, we switch to lasso automata. We stick to similar
notation and use the following functors over $\text{Set}^2$ :
\begin{align*}
  G(X_1,X_2)&=(X_1\times \Sigma,X_1\times\Sigma+ X_2\times\Sigma) & F(X_1,X_2)&=(X_1^\Sigma\times X_2^\Sigma,X_2^\Sigma), \\
  G_1(X_1,X_2)&=(1,\emptyset)+G(X_1,X_2) & F_2(X_1,X_2)&=F(X_1,X_2)\times (1,2).
\end{align*}
We usually write a pointed and accepting lasso automaton as a tuple
$(X_1,X_2,\overline{x},\delta_1,\delta_2,\delta_3,c)$ where
$\overline{x}\in X_1$, $\delta_1:X_1\to X_1^\Sigma$, $\delta_2:X_1\to X_2^\Sigma$,
$\delta_3:X_2\to X_2^\Sigma$ and $c:X_2\to 2$. We also define the maps
$\delta_\circ = \delta_2 + \delta_3$ (where we tacitly assume that $X_1\cap X_2=\emptyset$)
and $\delta:X_1\to X_2^{\Sigma^{\ast +}}$
given by $\delta(x,(u,v))=\delta_\circ(\delta_1(x,u),v)$. We use $X$ often
as a shorthand for $(X_1,X_2)$ or for the whole automaton if it is clear from
context. The lasso language associated with a lasso automaton is defined as
$L(X,\overline{x},c)=\{(u,v)\mid \delta(\overline{x},(u,v))\in c\}$. Our conventions
for DAs also apply to lasso automata. A lasso automaton is \emph{saturated}
if it respects $\gamma$-equivalence, i.e. if $(u,v)\sim_\gamma (u',v')$ and
either of the lassos is accepted, then so must the other. A saturated
lasso automaton is also called an $\Omega$-automaton and they act as acceptors
of $\omega$-languages (\cite{ciancia:2019:omegaAutomata}).

Some additional categories which appear in this paper are
\begin{enumerate}
  \item Mon: the category of monoids and monoid homomorphisms,
  \item  $A\twoheaddownarrow \mathcal{D}$ : the coslice category under $A$ whose objects are epimorphisms,
  \item $\mathcal{C}$ : the category of congruences for a specified type of algebra, with inclusion as
    morphisms (see \cite{ballester:2015:dualEquivalenceOfEquationsAndCoequations}),
  \item $\mathcal{P}\mathcal{L}$ : the category of preformation of languages as defined in
    \cite{ballester:2015:dualEquivalenceOfEquationsAndCoequations} (these are complete atomic Boolean
    subalgebras of $2^{\Sigma^\ast}$, which are closed under left and right language derivatives),
  \item $\text{Alg}_r(G)$ : the \emph{reachable} $G$-algebras (this only makes sense
    if we have some notion of reachability for $G$-algebras).
\end{enumerate}

We assume familiarity with standard algebraic, coalgebraic and categorical
notions such as congruence, bisimilarity and adjunctions.

\section{The Transition Monoid and Machine Constructions}

This section investigates the transition monoid and machine construction
which are well-known in formal language theory and create
a direct link between coalgebraic and algebraic language theory. Given
a DFA, one can construct a monoid homomorphism from the free monoid over
$\Sigma$ to a finite monoid (often called the transition monoid of the DFA)
which recognises the language accepted by the DFA. Conversely, given such a
monoid homomorphism, one can construct from it a DFA which accepts the language
recognised by the homomorphism.

We show that these constructions are functorial between suitable categories,
and that the functors form an adjunction (or more precisely a Galois connection).
In our initial treatment, we leave
out any accepting states as they can, without great trouble, be added to the
Galois connection at the end.

Bearing this in mind, we briefly outline the strategy for this section.
In a first step, we identify the categories involved. We then define the
functors on objects and show that they are indeed functorial. Finally, we show
that the functors we defined form a Galois connection.

\subsection{The Functors $T$ and $M$}

It is common to assume that the automata one works with are reachable.
Hence, we focus on reachable pointed automata given by a tuple
$(X,\overline{x},\delta)$ (or simply $(\overline{x},\delta)$ when $X$ is clear
from context). We don't make any restrictions on the size of the
carrier, so $X$ may have infinite cardinality.

Given such a tuple $(X,\overline{x},\delta:X\to X^\Sigma)$, we obtain the
map $\delta^\sharp:\Sigma\to X^X$ given by
$\delta^\sharp(a)(x)=\delta(x)(a)$. As $X^X$ naturally forms a monoid
under function composition and with $\text{id}_X$ as identity,
$\delta^\sharp$ uniquely extends to a monoid homomorphism between
$\Sigma^\ast$ (the free monoid over $\Sigma$) and $X^X$. We
simply denote this extended map also by $\delta^\sharp$, and
it is given by $\delta^\sharp(u)(x)=\delta(x)(u)$. The image
of $\delta^\sharp$ is the \emph{transition monoid} of
the automaton. This describes the object part of a functor, which maps
a reachable pointed automaton to a surjective monoid homomorphism whose domain
is $\Sigma^\ast$ (surjectivity here is the counterpart to reachability for automata).

This suggests that the category $\text{Alg}_r(G_1)$ of reachable $G_1$-algebras
and the category $\Sigma^{\ast}\twoheaddownarrow \text{Mon}$\footnote{coslice under
$\Sigma^\ast$ where we restrict our objects to surjective monoid homomorphisms}
of surjective monoid homomorphisms under $\Sigma^\ast$ would provide two
suitable categories for our endeavour.

Before we concretely define the transition monoid functor, we make some
observations about both categories which we state as lemmata.

\begin{lem}\label{lem:thin}
  The categories $\text{Alg}_r(G_1)$ and $\Sigma^{\ast}\twoheaddownarrow \text{Mon}$
  are both thin (or posetal).
\end{lem}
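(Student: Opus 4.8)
The plan is to exhibit in each category an initial object whose unique morphism into any other object is an epimorphism, and then to conclude thinness by left-cancellation: if $e\colon I\to A$ is epic and $f,g\colon A\to B$ are parallel morphisms, then $f\circ e = g\circ e$ (both are the unique morphism $I\to B$), so $f=g$, and hence there is at most one morphism $A\to B$.

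For $\Sigma^{\ast}\twoheaddownarrow\text{Mon}$ this is essentially immediate from the definition of the coslice. Its objects are surjective monoid homomorphisms $e\colon\Sigma^{\ast}\to M$, and a morphism $e\to e'$ is a monoid homomorphism $h$ with $h\circ e=e'$. Given two such morphisms $h_1,h_2\colon M\to M'$ over $e$, surjectivity of $e$ yields $h_1=h_2$: writing an arbitrary $m\in M$ as $m=e(u)$ we get $h_1(m)=e'(u)=h_2(m)$. (Equivalently, $\mathrm{id}_{\Sigma^{\ast}}$ is the initial object, and every object of the coslice is by construction a quotient of it.)

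For $\text{Alg}_r(G_1)$ I would first recall that the initial $G_1$-algebra is carried by $\Sigma^{\ast}$, with structure map selecting $\epsilon$ and sending $(u,a)\mapsto ua$; this algebra is reachable, hence lies in $\text{Alg}_r(G_1)$. For any reachable pointed automaton $(X,\overline{x},\delta)$ the unique $G_1$-algebra homomorphism $\Sigma^{\ast}\to X$ is $u\mapsto\delta(\overline{x})(u)$ (the extended transition function), and the defining property ``$(X,\overline{x},\delta)$ is reachable'' says exactly that this homomorphism is surjective. Hence if $f,g\colon(X,\overline{x},\delta)\to(Y,\overline{y},\delta')$ are homomorphisms of reachable $G_1$-algebras, then for every $u\in\Sigma^{\ast}$ we have $f(\delta(\overline{x})(u))=\delta'(\overline{y})(u)=g(\delta(\overline{x})(u))$, and since every state of $X$ has the form $\delta(\overline{x})(u)$ this gives $f=g$.

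There is no real obstacle here; the only mild care needed is (i) checking that a $G_1$-algebra homomorphism, which by definition preserves the basepoint and commutes with single-letter transitions, automatically commutes with the extended transition $\delta\colon X\to X^{\Sigma^{\ast}}$ --- a one-line induction on $|u|$ --- and (ii) being explicit that reachability of a $G_1$-algebra is precisely surjectivity of the canonical map out of the initial algebra, so that the cancellation argument applies verbatim in both cases.
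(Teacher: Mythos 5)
Your proof is correct and is essentially the paper's argument: in both categories you use reachability (respectively, surjectivity of the structure map out of $\Sigma^{\ast}$) to write every element as the image of a word, and conclude that two parallel morphisms must agree pointwise. Your packaging via the initial object and epi-cancellation just makes explicit the canonical map $u\mapsto\delta(\overline{x})(u)$ that the paper's computation uses implicitly, including the small induction showing homomorphisms commute with the extended transition function.
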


\begin{proof}
  Let $h_1,h_2:(\overline{x},\delta_X)\to (\overline{y},\delta_Y)$ and
  $x\in X$. By reachability there exists  $u\in \Sigma^\ast$ such that
  $x=\delta_X(\overline{x})(u)$ and hence
  \[
    h_1(x)=h(\delta_X(\overline{x})(u))=\delta_Y(\overline{y})(u)=\ldots =h_2(x).
  \] 
  Similarly, let $h_1,h_2:A\to B$ be $\Sigma^\ast\twoheaddownarrow\text{Mon}$
  morphisms between $f:\Sigma^\ast \twoheadrightarrow A$ and $g:\Sigma^\ast\twoheadrightarrow B$.
  Then for any $x\in A$, there exists some $u\in \Sigma^\ast$ such that
  $x=f(u)$. Hence
  \[
  h_1(x)=h_1(f(u))=g(u)=\ldots =h_2(x).\qedhere
  \] 
\end{proof}

Additionally, it is straight-forward to show for either category that any
morphism between two objects must be surjective (seen as a Set morphism).
Moreover, the category $\Sigma^\ast\twoheaddownarrow\text{Mon}$ is equivalent
to another category we have introduced before.

\begin{lem}
  The categories $\Sigma^\ast\twoheaddownarrow\text{Mon}$ and
  $\mathcal{C}$ are equivalent.
\end{lem}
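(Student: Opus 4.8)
The plan is to exhibit an equivalence of categories by constructing functors in both directions and checking they are mutually (pseudo-)inverse; since both categories are thin by Lemma~\ref{lem:thin}, it will in fact suffice to show the two object classes are in order-isomorphic bijection, so almost all the work is on objects. Recall that an object of $\mathcal{C}$ is a congruence on the specified type of algebra — here the relevant algebra is the free monoid $\Sigma^\ast$ — with inclusions as morphisms, while an object of $\Sigma^\ast\twoheaddownarrow\mathrm{Mon}$ is a surjective monoid homomorphism $f:\Sigma^\ast\twoheadrightarrow A$.

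First I would define the functor $\Phi:\Sigma^\ast\twoheaddownarrow\mathrm{Mon}\to\mathcal{C}$ sending a surjection $f:\Sigma^\ast\twoheadrightarrow A$ to its kernel congruence $\ker f=\{(u,v)\mid f(u)=f(v)\}$, which is a monoid congruence on $\Sigma^\ast$. On morphisms: given $h:A\to B$ with $g\circ{=}\,h\circ f$, surjectivity of $f$ forces $\ker f\subseteq\ker g$, so $\Phi(h)$ is the corresponding inclusion; functoriality is immediate since $\mathcal{C}$ is thin. Conversely I would define $\Psi:\mathcal{C}\to\Sigma^\ast\twoheaddownarrow\mathrm{Mon}$ sending a congruence ${\equiv}$ to the canonical projection $\Sigma^\ast\twoheadrightarrow\faktor{\Sigma^\ast}{\equiv}$; an inclusion ${\equiv_1}\subseteq{\equiv_2}$ induces by the universal property of quotients a (unique, surjective) homomorphism $\faktor{\Sigma^\ast}{\equiv_1}\twoheadrightarrow\faktor{\Sigma^\ast}{\equiv_2}$ commuting with the projections, giving the morphism part.

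Then I would check the round trips. The composite $\Phi\Psi$ sends ${\equiv}$ to $\ker\big(\Sigma^\ast\twoheadrightarrow\faktor{\Sigma^\ast}{\equiv}\big)$, which is literally ${\equiv}$ again, so $\Phi\Psi=\mathrm{id}_{\mathcal{C}}$ on the nose. The composite $\Psi\Phi$ sends $f:\Sigma^\ast\twoheadrightarrow A$ to the projection $\Sigma^\ast\twoheadrightarrow\faktor{\Sigma^\ast}{\ker f}$, and the first isomorphism theorem for monoids gives a canonical isomorphism $\faktor{\Sigma^\ast}{\ker f}\xrightarrow{\ \cong\ }A$ commuting with the maps out of $\Sigma^\ast$; this is a natural isomorphism $\Psi\Phi\Rightarrow\mathrm{id}$, naturality being automatic by thinness. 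Hence the two categories are equivalent (indeed $\mathcal{C}$ is a reflective-type skeleton of $\Sigma^\ast\twoheaddownarrow\mathrm{Mon}$).

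The only genuine obstacle is bookkeeping rather than mathematics: one must confirm that the ``specified type of algebra'' attached to $\mathcal{C}$ in this instance is indeed monoids with underlying free object $\Sigma^\ast$ (so that congruences of $\mathcal{C}$ match kernels of monoid homomorphisms out of $\Sigma^\ast$), and that the morphisms of $\Sigma^\ast\twoheaddownarrow\mathrm{Mon}$ are exactly the commuting triangles — both of which are just a matter of unwinding the definitions from \cite{ballester:2015:dualEquivalenceOfEquationsAndCoequations} and the coslice construction. Once that alignment is fixed, the first isomorphism theorem does all the real work, and thinness of both sides (Lemma~\ref{lem:thin}) reduces naturality and functoriality to triviality.
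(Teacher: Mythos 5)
Your proof is correct and follows the same route as the paper: the paper's (one-line) argument is exactly the kernel/quotient correspondence, with uniqueness up to isomorphism playing the role of your first-isomorphism-theorem step. You merely spell out the functors, the round trips, and the (trivial, by thinness) naturality that the paper leaves implicit.
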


\begin{proof}
  To see this, note that each surjective monoid homomorphism
  gives a congruence (the kernel), and each congruence defines a surjective monoid
  homomorphism (which is unique up to isomorphism).
\end{proof}

We choose to work with the
category $\mathcal{C}$ instead of $\Sigma^\ast\twoheaddownarrow\text{Mon}$ but
this is only personal preference. It creates a first link to the work in
\cite{ballester:2015:dualEquivalenceOfEquationsAndCoequations}.

With this setup, we introduce the two functors $T$ (for transition monoid)
and $M$ (for machine) which are given below:

\begin{minipage}[c][][c]{\textwidth}
  \begin{minipage}{0.45\textwidth}
    \[
    T : \text{Alg}_r(G_1)\to \mathcal{C}
    \] 
    \begin{itemize}
      \item Objects: $(\overline{x},\delta) \mapsto \ker \delta^\sharp$
      \item Morphisms:
        \[
          h:(\overline{x},\delta_X)\twoheadrightarrow (\overline{y},\delta_Y) \implies \ker \delta_X^\sharp \subseteq \ker\delta_Y^\sharp
        \]
    \end{itemize}
  \end{minipage}
  \begin{minipage}{0.45\textwidth}
    \[
    M : \mathcal{C} \to \text{Alg}_r(G_1)
    \] 
    \begin{itemize}
      \item Objects: $C \mapsto ([\epsilon]_C,\sigma_C([w]_C)(a)=[wa]_C)$
      \item Morphisms:
        \[
          h: C \subseteq D \mapsto Mh([w]_C)=[w]_D
        \]
    \end{itemize}
  \end{minipage}
  \vspace{1em}
\end{minipage}

Showing that these are well-defined is straight-forward. For instance, showing
that the object part of $T$ is well-defined just amounts to showing that
$x \overset{u,v}{\longrightarrow} y \overset{u',v'}{\longrightarrow} z$ 
implies $x \overset{uu',vv'}{\longrightarrow} z$ which is trivially the case.
In order to verify that $T$ is well-defined for morphisms, one should make use
of the fact that both automata are reachable.

At this stage one could already bring acceptance back into the picture, in
which case the categories involved would be that of reachable pointed
and accepting automata, and congruence relations with a subset of
the equivalence classes. The functors naturally extend as, for $T$, each choice of
accepting states $c$ gives the subset of equivalence classes
$\{[u]_{\ker \delta^\sharp}\mid \delta(\overline{x})(u)\in c\}$, and for $M$,
given a subset of equivalence classes, this immediately defines a subset of
accepting states. Additionally, one can at this point check that the two
functors are language preserving; for each reachable pointed and acceptable
automaton  $\mathcal{A} =(\overline{x},\delta,c)$ : $L(\mathcal{A})=L(T(\mathcal{A}))$,
and for each congruence relation $C$ over $\Sigma^\ast$ together with a set $P$
of $C$-equivalence classes: $L(M(C,P))=\bigcup P=L(C,P)$.

\subsection{$T\vdash M$}

The functors both being in place, we show that $T$ is a right adjoint of
$M$ (so they form a Galois connection), the proof of which is not very involved.
In fact, the unit is trivial as $C=TMC$ so $\eta_C=\text{id}_C$. To see this,
note that $(u,v)\in TMC$ if for all $[w]_C: \sigma_C([w]_C)(u)=\sigma_C([w]_C)(v)$.
By the definition of $\sigma_C$, this is equivalent to $[wu]_C=[wv]_C$ and
for $w=\epsilon$ we get $[u]_C=[v]_C$, i.e. $(u,v)\in C$, the converse following
with $C$ being a congruence.

The counit of the Galois connection is given by
\begin{align*}
  \epsilon_X: MTX &\longrightarrow X \\
  [u]_\sim &\longmapsto \epsilon_X([u]_\sim) = \delta(\overline{x},u).
\end{align*}

\begin{lem}
  The counit is well-defined.
\end{lem}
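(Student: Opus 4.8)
The plan is to show that the map $\epsilon_X\colon MTX \to X$, $[u]_\sim \mapsto \delta(\overline{x},u)$, is well-defined, i.e.\ that its value does not depend on the choice of representative $u$ of the class $[u]_\sim$. Here $\sim$ denotes the congruence $\ker\delta^\sharp$ associated with the (reachable, pointed) automaton $X = (\overline{x},\delta)$, and $MTX$ is the automaton on the quotient $\Sigma^\ast/{\sim}$ with initial state $[\epsilon]_\sim$ and transition $\sigma_\sim([w]_\sim)(a) = [wa]_\sim$.

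**First I would** unfold the definition of $\sim = \ker\delta^\sharp$: by construction $u \sim v$ means $\delta^\sharp(u) = \delta^\sharp(v)$ as elements of $X^X$, that is, $\delta(x)(u) = \delta(x)(v)$ for every state $x \in X$. **Then I would** instantiate this at the initial state $x = \overline{x}$, which immediately yields $\delta(\overline{x},u) = \delta(\overline{x},v)$. Hence if $[u]_\sim = [v]_\sim$ then $\epsilon_X([u]_\sim) = \delta(\overline{x},u) = \delta(\overline{x},v) = \epsilon_X([v]_\sim)$, so $\epsilon_X$ is a well-defined function on $MTX$. (Implicitly this also uses that reachability of $X$ guarantees $MTX$ is itself reachable and that $\epsilon_X$ is surjective, but that is not needed for well-definedness per se.)

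**For completeness I would** also remark that $\epsilon_X$ is in fact a $G_1$-algebra morphism, since it visibly preserves the initial state ($[\epsilon]_\sim \mapsto \delta(\overline{x},\epsilon) = \overline{x}$) and commutes with transitions: $\epsilon_X(\sigma_\sim([u]_\sim)(a)) = \epsilon_X([ua]_\sim) = \delta(\overline{x},ua) = \delta(\delta(\overline{x},u),a)$, which is $\delta$ applied to $\epsilon_X([u]_\sim)$ and $a$. By Lemma~\ref{lem:thin} the category $\text{Alg}_r(G_1)$ is thin, so there is nothing more to check for naturality once the components exist.

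**I do not expect a serious obstacle here**: the only subtlety is recognising that the kernel of $\delta^\sharp$ is defined by equality of the induced maps $X \to X$ \emph{for all starting states}, so it is a priori a finer relation than the Myhill--Nerode-style relation "$\delta(\overline{x},u) = \delta(\overline{x},v)$"; well-definedness then holds because the former refines the latter (evaluation at $\overline{x}$). If anything, the mild delicacy is purely notational — keeping track of which $\delta$ (the coalgebra map $X \to X^\Sigma$, its curried extension $X \to X^{\Sigma^\ast}$, or $\delta^\sharp\colon \Sigma^\ast \to X^X$) is meant in each line.
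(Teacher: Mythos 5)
Your proposal is correct and matches the paper's own argument: well-definedness follows by unfolding $\ker\delta^\sharp$ and evaluating at the initial state $\overline{x}$, and the additional verification that $\epsilon_X$ preserves the initial state and commutes with the transitions is exactly what the paper also checks to conclude it is a (reachable) $G_1$-algebra morphism. Nothing further is needed.
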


\begin{proof}
Let $u\sim v$, then by definition $\delta^\sharp(u)=\delta^\sharp(v)$
and so $\delta(\overline{x})(u)=\delta(\overline{x})(v)$ so the morphism is
well-defined. Moreover, this is a reachable $G_1$-morphism as it preserves the initial state
($\epsilon_X([\epsilon]_\sim)=\delta(\overline{x})(\epsilon)=\overline{x}$)
and respects the transition function as for $a\in \Sigma$:
\[
  \epsilon_X(\sigma_\sim([u]_\sim)(a))=\delta(\overline{x})(ua)=\delta(\epsilon_X([u]_\sim))(a).\qedhere
\]
\end{proof}

\begin{cor}
  $T$ is a right adjoint of $M$.
\end{cor}

As each section of $\eta$ is the identity,
$\mathcal{C}$ can be seen as a full coreflective subcategory of $\text{Alg}_r(G_1)$.
As before, this Galois connection can be extended to incorporate acceptance.
In that case $\eta_C$ is the identity on the subset of equivalence classes
that comes with a congruence, and $\epsilon_X$ maps the set of accepting states
$P\subseteq \Sigma^\ast/{\sim}$ to the set $\{\delta(x,v)\mid [v]_\sim \in P\}$.
This leaves us with the final diagram for this section:

\begin{center}
\begin{tikzpicture}[modal]
\node[] at (3,0) (1) [label=above:{}] {$\text{Alg}_r(G_1)$};
\node[] at (6,0) (2) [label=above:{}] {$\mathcal{C}\cong \Sigma^\ast\twoheaddownarrow\text{Mon}$};
\node[rotate=90] at (4.5,0) {$\dashv$};
\path[->] (1) edge[bend left] node[]{$T$} (2);
\path[->] (2) edge[bend left] node[]{$M$} (1);

\node[] at (3,2.5) (3) [label=above:{}] {$\text{DA}_r$};
\node[] at (6,2.5) (4) [label=above:{}] {$(\mathcal{C},P)$};
\node[rotate=90] at (4.5,2.5) {$\dashv$};
\path[->] (3) edge[bend left] node[]{$\overline{T}$} (4);
\path[->] (4) edge[bend left] node[]{$\overline{M}$} (3);
\path[->] (3) edge (1);
\path[->] (4) edge (2);

\end{tikzpicture}
\end{center}

\section{$\nuC$ and $\muPL$}

In \cite{ballester:2015:dualEquivalenceOfEquationsAndCoequations}, Rutten et al.\ introduce for each automaton
$(X,\delta)$ the automata $\mathsf{free}(X,\delta)$ and
$\mathsf{cofree}(X,\delta)$, corresponding respectively to the largest
set of equations and the smallest set of coequations satisfied by
$(X,\delta)$. They furthermore show, that $\mathsf{free}$ and $\mathsf{cofree}$ 
are functors on suitable categories and that they are dually equivalent over the
categories $\mathcal{C}$ of congruence quotients and $\mathcal{P} \mathcal{L}$ 
of preformations of languages.

In this section we introduce two functors $\nuC$ and $\muPL$ which are closely
related to $\mathsf{free}$ and $\mathsf{cofree}$. Given a pointed automaton
$(X,\overline{x},\delta)$,
$\nuC(X,\overline{x},\delta)$ corresponds to the largest set of equations
satisfied by the reachable part of $X$. On the other hand, for
an accepting automaton $(X,\delta,c)$, $\muPL(X,\delta,c)$ corresponds to the
smallest preformation of languages including $\{L(x,c)\mid x\in X\}$. Both functors
can be extended to pointed and accepting automata, making them endofunctors
on $\text{DA}$. Moreover, we show that $\nuC$ is an (idempotent) comonad and
$\muPL$ a monad.

We obtain the functors $\nuC$ and $\muPL$ by combining the
reachability-inclusion adjunction\footnote{the reachable functor sends a pointed automaton to its reachable part}, the transition-machine adjunction and
the lifted contravariant powerset adjunction \cite{bonchi:2014:algebraCoalgebra,bezhanishvili:2020:minimisation}.

\begin{center}
\begin{tikzpicture}[modal]
\node[] at (0,0) (0) [label=above:{}] {$\text{Alg}(G_1)$};
\node[] at (3,0) (1) [label=above:{}] {$\text{Alg}_r(G_1)$};
\node[] at (6,0) (2) [label=above:{}] {$\mathcal{C}$\phantom{CC}};
\node[rotate=90] at (1.5,0) {$\dashv$};
\node[rotate=90] at (4.5,0) {$\dashv$};
\node at (3,2.4) {$\nuC: \text{Alg}(G_1)\to \text{Alg}(G_1)$};
\node at (3,1.8) {$\nuC= I\circ M\circ T\circ R$};
\node[] at (8.5,0) (5) [label=above:{}] {$\text{CoAlg}(F_2)$};
\node[] at (11.5,0) (6) [label=above:{}] {$\text{Alg}(G_1)^{\text{op}}$};
\node[] at (14.5,0) (7) [label=above:{}] {$\mathcal{C}^{\text{op}}$};
\node[rotate=90] at (10,0) {$\vdash$};
\node[rotate=90] at (13,0) {$\vdash$};
\node at (11.5,2.4) {$\muPL: \text{CoAlg}(F_2)\to \text{CoAlg}(F_2)$};
\node at (11.5,1.8) {$\muPL= \overline{P}^{\text{op}}\circ \nuC^{\text{op}}\circ \overline{P}$};
\path[->] (0) edge[bend left] node[]{$R$} (1);
\path[->] (1) edge[bend left] node[]{$T$} (2);
\path[->] (1) edge[bend left] node[]{$I$} (0);
\path[->] (2) edge[bend left] node[]{$M$} (1);
\path[->] (5) edge[bend left] node[]{$\widehat{P}$} (6);
\path[->] (6) edge[bend left] node[]{$(T\circ R)^{\text{op}}$} (7);
\path[->] (6) edge[bend left] node[]{$\widehat{P}^{\text{op}}$} (5);
\path[->] (7) edge[bend left] node[]{$(I\circ M)^{\text{op}}$} (6);
\end{tikzpicture}
\end{center}

It is immediately clear from the diagram that $\nuC$ is a comonad and
$\muPL$ a monad. We have already seen that the unit of the transition-machine
Galois connection is pointwise the identity. The unit for the adjunction which
gives rise to $\nuC$ is actually the same, and so in particular pointwise
the identity. This means that $\mathcal{C}$ is a full coreflective subcategory
of $\text{Alg}(G_1)$ (and by extension also $\text{DA}$), and $\nuC$ is
an idempotent comonad. We depict the
monad and comonad situation in the following diagram:

\begin{center}
\begin{tikzpicture}[modal]
\node[] at (2,0) (0) [label=above:{}] {$(\Sigma^\ast)^\Sigma$};
\node[] at (5,0) (2) [label=above:{}] {$\nuC(\overline{x},\delta)^\Sigma$};
\node[] at (7.5,0) (3) [label=above:{}] {$X^\Sigma$};
\node[] at (10,0) (4) [label=above:{}] {$\muPL(\delta,c)^\Sigma$};
\node[] at (13,0) (6) [label=above:{}] {$(2^{\Sigma^\ast})^\Sigma$};
\node[] at (2,1.5) (7) [label=above:{}] {$\Sigma^\ast$};
\node[] at (5,1.5) (9) [label=above:{}] {$\nuC(\overline{x},\delta)$};
\node[] at (7.5,1.5) (a) [label=above:{}] {$X$};
\node[] at (10,1.5) (b) [label=above:{}] {$\muPL(\delta,c)$};
\node[] at (13,1.5) (d) [label=above:{}] {$2^{\Sigma^\ast}$};
\node[] at (2,3) (e) [label=above:{}] {$1$};
\node[] at (13,3) (f) [label=above:{}] {$2$};
\path[->] (0) edge node[]{} (2);
\path[->] (2) edge node[]{} (3);
\path[->] (3) edge node[]{} (4);
\path[->] (4) edge node[]{} (6);
\path[->] (7) edge node[]{} (0);
\path[->] (7) edge node[]{} (9);
\path[->] (9) edge node[]{} (2);
\path[->] (9) edge node[below]{$\epsilon_{(\overline{x},\delta)}$} (a);
\path[->] (a) edge node[]{$\delta$} (3);
\path[->] (a) edge node[below]{$\eta_{(\delta,c)}$} (b);
\path[->] (a) edge[bend left,looseness=0.5] node[]{$c$} (f);
\path[->] (b) edge node[]{} (4);
\path[->] (b) edge node[]{} (d);
\path[->] (b) edge[bend left,looseness=0.5] node[]{} (f);
\path[->] (d) edge node[]{} (6);
\path[->] (d) edge node[]{} (f);
\path[->] (e) edge node[]{} (7);
\path[->] (e) edge[bend left,looseness=0.5] node[]{} (9);
\path[->] (e) edge[bend left,looseness=0.5] node[]{$\overline{x}$} (a);
\end{tikzpicture}

\begin{align*}
  \epsilon_{(\overline{x},\delta)}: \nuC (\overline{x},\delta) &\longrightarrow (\overline{x},\delta) & \eta_{(\delta,c)}: (\delta,c) &\longrightarrow \muPL (\delta,c) \\
  [u] &\longmapsto \epsilon_{(\overline{x},\delta)}([u]) = \delta(\overline{x})(u) & x &\longmapsto \eta_{(\delta,c)}(x) = \{[u]\mid \delta(x)(u^r)\in c\} 
\end{align*}

\end{center}

In order to give some starting point on what $\nuC$ and $\muPL$ do, we concretely
specify them for $(X,\overline{x},\delta,c)$ in the table below. The entries
in gray correspond to what one gets upon lifting the functors to $\text{DA}$.
These can also neatly be seen in the diagram above; for instance, in order
to equip $\nuC(\overline{x},\delta)$ with accepting states one composes
$c$ with $\epsilon_{(\overline{x},\delta)}$. In a similar fashion, one can
equip $\muPL\left( \delta,c \right)$ with an initial state by composing
$\eta_{(\delta,c)}$ with $\overline{x}$.

\renewcommand{\arraystretch}{1.8}
\begin{table}[H]
  \centering
  \label{tab:none}
  \begin{tabular}{|c|c|c|}
  \hline
   & $\nuC(\overline{x},\delta)$ & $\muPL(\delta,c)$ \\ \hline
    state space & $\Sigma^\ast/{\left(\ker (\restr{\delta}{\left<\overline{x} \right>})^\sharp\right)}$ & $P\left(\Sigma^\ast/{\left( \ker \left(\restr{\widehat{\delta}}{\left<c \right>}\right)^\sharp \right) }\right)$\tablefootnote{here $\left<c \right>$ consists of all reachable states w.r.t. $\widehat{\delta}$} \\ \hline
  transition & $\sigma([w])(u)=[wu]$ & $\widehat{\sigma}(U)(u)=\{[w]\mid [wu^r]\in U\} $  \\ \hline
  initial state & $[\epsilon]$ & \textcolor{gray}{$\{[w]\mid \delta(\overline{x})(w^r)\in c\} $} \\ \hline
  final state & \textcolor{gray}{$\{[w]\mid \delta(\overline{x})(w)\in c\} $} & $\{U \mid [\epsilon]\in U\} $ \\ \hline
  \end{tabular}
\end{table}

The functor $\nuC$ is very closely related to $\mathsf{free}$. In fact, if
$X$ is reachable from an initial state $\overline{x}$, then
$\mathsf{free}(X,\delta)\cong \nuC(\overline{x},\delta)$ as
$G_1$-algebras. Additionally, even if our automaton consists of multiple
parts, we can obtain $\mathsf{free}(X,\delta)$ by gluing together
the $\nuC(x,\delta)$.

\begin{prop}
  Let $(X,\overline{x},\delta)$ be a pointed automaton.
  \begin{enumerate}
    \item If $X=\left<\overline{x} \right>$, then $\mathsf{free}(X,\delta)\cong \nuC(\overline{x},\delta)$,
    \item $\mathsf{free}(X,\delta)\cong \Pi_{x\in X}\nuC(x,\delta)$ in $\text{Alg}_r(G_1)$.
  \end{enumerate}
\end{prop}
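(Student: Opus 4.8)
The plan is to unwind the composite $\nuC = I\circ M\circ T\circ R$, read off explicit congruence quotients of $\Sigma^\ast$ for both $\mathsf{free}(X,\delta)$ and the $\nuC(x,\delta)$, and then reduce the whole proposition to a one-line identity between congruences.

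First I would record the explicit descriptions. Recall that the largest set of equations satisfied by $(X,\delta)$ is the congruence $\ker\delta^\sharp$ — the pairs $(u,v)$ with $\delta^\sharp(u)=\delta^\sharp(v)$, equivalently with $\delta(y)(u)=\delta(y)(v)$ for all $y\in X$ — so $\mathsf{free}(X,\delta)$ is, up to isomorphism, the reachable $G_1$-algebra with carrier $\Sigma^\ast/{\ker\delta^\sharp}$, initial state $[\epsilon]$ and transitions $[w]\cdot a=[wa]$. On the other side, $\nuC(x,\delta)=I(M(T(R(X,x,\delta))))$ with $R(X,x,\delta)=(\langle x\rangle,x,\restr{\delta}{\langle x\rangle})$, so $\nuC(x,\delta)$ is the reachable $G_1$-algebra carried by $\Sigma^\ast/{\ker\left(\restr{\delta}{\langle x\rangle}\right)^\sharp}$ with the analogous structure — which is exactly the first column of the preceding table. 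For part~1, the hypothesis $X=\langle\overline{x}\rangle$ means $(X,\overline{x},\delta)$ is already reachable, so $R$ fixes it and $\restr{\delta}{\langle\overline{x}\rangle}=\delta$; hence $\ker\left(\restr{\delta}{\langle\overline{x}\rangle}\right)^\sharp=\ker\delta^\sharp$ and $\nuC(\overline{x},\delta)$ is literally the same $G_1$-algebra as $\mathsf{free}(X,\delta)$.

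For part~2 I would first pin down products in $\text{Alg}_r(G_1)$. By Lemma~\ref{lem:thin} the category is thin, and identifying a reachable $G_1$-algebra with its underlying $G_1$-congruence on $\Sigma^\ast$ ordered by inclusion, it is the poset of such congruences, which is closed under arbitrary intersections. Since products in a thin category are meets, the product $\Pi_{x\in X}\nuC(x,\delta)$ exists and is the intersection $\bigcap_{x\in X}\ker\left(\restr{\delta}{\langle x\rangle}\right)^\sharp$. It then remains to verify
\[
  \bigcap_{x\in X}\ker\left(\restr{\delta}{\langle x\rangle}\right)^\sharp=\ker\delta^\sharp .
\]
Unfolding, a pair $(u,v)$ lies in the left-hand side iff $\delta(y)(u)=\delta(y)(v)$ for every $x\in X$ and every $y\in\langle x\rangle$, while it lies in $\ker\delta^\sharp$ iff $\delta(y)(u)=\delta(y)(v)$ for every $y\in X$; since $y\in\langle y\rangle$ for all $y$ we have $\bigcup_{x\in X}\langle x\rangle=X$, and the two conditions coincide. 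Combined with the description of $\mathsf{free}(X,\delta)$, this yields $\mathsf{free}(X,\delta)\cong\Pi_{x\in X}\nuC(x,\delta)$ in $\text{Alg}_r(G_1)$.

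I expect the only genuine point of care to be the handling of the (possibly infinite) product in $\text{Alg}_r(G_1)$: one must justify that it is computed as the intersection of the underlying congruences — which rests on thinness together with closure of the congruence poset under arbitrary intersections — and that this intersection is still an object of $\text{Alg}_r(G_1)$, i.e.\ that $\Sigma^\ast$ maps onto $\Sigma^\ast/{\bigcap_x\theta_x}$, which is immediate since this carrier is a quotient of the initial algebra. Everything else is a routine unfolding of $R$, $T$, $M$ and of what it means for an automaton to satisfy an equation.
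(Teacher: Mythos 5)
Your proposal is correct and follows essentially the same route as the paper: both arguments reduce part~1 to the observation that reachability gives $\restr{\delta}{\left<\overline{x}\right>}=\delta$, and part~2 to the identity $\bigcap_{x\in X}\ker\bigl(\restr{\delta}{\left<x\right>}\bigr)^\sharp=\ker\delta^\sharp$. The only cosmetic difference is that the paper exhibits the two comparison maps explicitly and uses reachability of the product automaton to see that its elements are tuples $([u]_{\sim_x})_x$, whereas you package the same fact as the computation of the meet in the thin category $\text{Alg}_r(G_1)$ viewed as a poset of congruences closed under intersection.
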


\begin{proof}
  For the first point, note that if $X$ is reachable, then $\nuC(\overline{x},\delta)$ 
  is just the transition monoid of $X$, which is isomorphic to 
  $\mathsf{free}(X,\delta)$.

  For the second point, we have a map $\mathsf{free}(X,\delta)\to \nuC(x_i,\delta)$ 
  for each $x_i\in X$, given by $[u]_{\sim} \mapsto [u]_{\sim_i}$
  where we view $\mathsf{free}(X,\delta)$ as the transition monoid of  $X$ 
  presented as a quotient congruence $\sim$, and where
  $\sim_i=\ker (\restr{\delta}{\left<x_i \right>})^\sharp$. Hence
  we get a map $\mathsf{free}(X,\delta)\to \Pi_{x\in X}\nuC(x,\delta)$
  in $\text{Alg}_r(G_1)$. The map in the other direction is constructed as
  follows. Let $\sim_i = \ker (\restr{\delta}{\left<x_i \right>})^\sharp$.
  We build the congruence $\sim'$ by $u\sim' v \iff \forall i: u\sim_i v$, and
  claim that $\sim'=\sim$. Clearly we have ${\sim}\subseteq {\sim'}$. Let $u\sim' v$, then
  $u\sim_i v$ for all $i$. Take some arbitrary $x_i\in X$, then $u\sim_i v$ so
  $\delta(x_i,u)=\delta(x_i,v)$, which means that $u\sim v$. As we are in
  $\text{Alg}_r(G_1)$, the product automaton is reachable and any
  tuple is of the shape $([u]_{\sim_i})_{i}$. We then map
  such a tuple to $[u]_\sim$ which is well-defined by the above argument.
\end{proof}

Before we can establish a relationship between $\mathsf{cofree}$ and
$\muPL$, we have to establish some useful properties which allow us to
characterise $\muPL(\delta,c)$ as the least preformation of languages
containing $L(x,\delta,c)$ for all $x\in X$.

\begin{prop}
  For an accepting automaton $(X,\delta,c)$, $\muPL(\delta,c)$ is minimal.
\end{prop}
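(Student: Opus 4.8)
The plan is to argue directly from the explicit description of $\muPL(\delta,c)$ recorded in the table. Here ``minimal'' is to be read in the coalgebraic sense: the canonical morphism from $\muPL(\delta,c)$ to the final coalgebra $2^{\Sigma^\ast}$ is injective, or equivalently $\muPL(\delta,c)$ admits no proper quotient coalgebra, or equivalently any two distinct states accept distinct languages. This is the observability counterpart of the algebra-side minimality enjoyed by $\nuC$ (being a congruence quotient of $\Sigma^\ast$), and it is the property one needs in order to regard $\muPL(\delta,c)$ as a subobject of $2^{\Sigma^\ast}$, hence as a preformation of languages. Since the state space of $\muPL(\delta,c)$ is the \emph{full} powerset $P(Q)$ with $Q=\Sigma^\ast/{\approx}$ and ${\approx}=\ker(\restr{\widehat\delta}{\langle c\rangle})^\sharp$, it suffices to separate any two distinct subsets $U,U'\subseteq Q$ by a word.

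First I would read off the word-level transition, which the table already records as $\widehat\sigma(U)(u)=\{[w]\mid[wu^r]\in U\}$; the appearance of $u^r$ rather than $u$ is precisely the point where one must keep in mind that word reversal is an anti-automorphism of $\Sigma^\ast$. Combining this with the acceptance condition ``$U$ is final iff $[\epsilon]\in U$'', the language accepted from state $U$ is $L_U=\{u\in\Sigma^\ast\mid \widehat\sigma(U)(u)\text{ is final}\}=\{u\in\Sigma^\ast\mid[u^r]\in U\}=\phi^{-1}(U)$, where $\phi:\Sigma^\ast\to Q$ sends $u$ to $[u^r]$. Now reversal is a bijection of $\Sigma^\ast$ and the quotient map $\Sigma^\ast\twoheadrightarrow Q$ is onto, so $\phi$ is surjective and $U\mapsto\phi^{-1}(U)$ is therefore injective on $P(Q)$; concretely, given $U\neq U'$ one picks $q=[w]\in U\setminus U'$ (say) and sets $u:=w^r$, so that $[u^r]=q$ and hence $u\in L_U\setminus L_{U'}$. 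This shows the behaviour morphism $\muPL(\delta,c)\to 2^{\Sigma^\ast}$ is injective, i.e.\ $\muPL(\delta,c)$ is minimal.

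I do not expect a genuine obstacle: once the word-level formula for $\widehat\sigma$ is in hand the argument is pure bookkeeping, and the only points that deserve a moment of care are composing the reversal in the right order and the well-definedness of $\widehat\sigma$ on ${\approx}$-classes --- the latter is inherited from the construction (it uses that ${\approx}$ is a congruence), so I would cite the table as already established rather than redo it. A more structural alternative would use $\muPL=\overline{P}^{\text{op}}\circ\nuC^{\text{op}}\circ\overline{P}$: since $\nuC$ always outputs, up to isomorphism, an object of $\mathcal{C}$, i.e.\ a minimal transition-monoid algebra, one could hope to transport minimality across $\overline{P}$. But making that precise requires analysing how the lifted contravariant powerset acts on quotient coalgebras, which seems to cost at least as much effort, so I would keep the direct computation as the proof and mention the duality only in passing.
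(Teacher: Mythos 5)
Your proposal is correct and follows essentially the same route as the paper: unfold the table's description of $\widehat\sigma$ and the final states to get $L(U)=\{u\mid [u^r]\in U\}$, and conclude that distinct states of $\muPL(\delta,c)$ accept distinct languages, hence the automaton is minimal. The only difference is cosmetic — you spell out the separating witness $u=w^r$ and phrase the conclusion via injectivity of $U\mapsto\phi^{-1}(U)$, where the paper simply states $L(U)=L(V)\implies U=V$.
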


\begin{proof}
  Let $U$ be a state in $\muPL(\delta,c)$. Then
  \[
    u\in L(U) \iff [\epsilon]\in \widehat{\sigma}(U)(u)\iff [\epsilon]\in \{[w]\mid [wu^r]\in U\} \iff [u^r]\in U.
  \]
  So $L(U)=\{u\mid [u^r]\in U\}$ and $L(U)=L(V)\implies U=V$. Hence any two
  bisimilar states are equal, and $\muPL(\delta,c)$ is minimal.
\end{proof}

\begin{prop}
  For an accepting automaton $(X,\delta,c)$, $\muPL(\delta,c)$ is (isomorphic
  to) a preformation of languages.
\end{prop}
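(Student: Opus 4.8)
The plan is to exhibit the language map $\ell\colon U\mapsto L(U)$ as an isomorphism from $\muPL(\delta,c)$ onto a complete atomic Boolean subalgebra of $2^{\Sigma^\ast}$ that is closed under left and right derivatives. Write $\sim$ for the congruence $\ker(\restr{\widehat\delta}{\langle c\rangle})^\sharp$ on $\Sigma^\ast$ and $C=\Sigma^\ast/{\sim}$, so the state space of $\muPL(\delta,c)$ is $P(C)$, and define $q\colon\Sigma^\ast\to C$ by $q(u)=[u^r]_\sim$. By the preceding proposition $L(U)=q^{-1}(U)$ and $\ell$ is injective; and since $q(w^r)=[w]_\sim$ for every $w$, the map $q$ is surjective. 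Note that $\sim$, being the kernel of a monoid homomorphism out of $\Sigma^\ast$, is a monoid congruence, a fact I will need at the end.

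First I would record that $B:=\{\,q^{-1}(U)\mid U\subseteq C\,\}$, the image of $\ell$, is a complete atomic Boolean subalgebra of $2^{\Sigma^\ast}$. Indeed, preimage along $q$ commutes with complements and with arbitrary unions, so $B$ is closed under these operations; its atoms are the fibres $q^{-1}(\{[w]_\sim\})$ (nonempty, as $q$ is surjective), and every element of $B$ is a union of such fibres, so $B$ is atomic and complete.

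Next I would check closure under derivatives. For a letter $a$, unwinding $\widehat\sigma(U)(a)=\{[w]\mid [wa]\in U\}$ together with the identity $(au)^r=u^r a$ gives $L(\widehat\sigma(U)(a))=\{u\mid [(au)^r]\in U\}=a^{-1}L(U)$; iterating, $u^{-1}L(U)=L(\widehat\sigma(U)(u))\in B$ for all $u$. For right derivatives, using $(wa)^r=a\,w^r$ and $[a\,w^r]_\sim=[a]_\sim[w^r]_\sim$ (this is exactly where the congruence property enters), one obtains $L(U)a^{-1}=\{w\mid [a]_\sim[w^r]_\sim\in U\}=L(U')$ for $U'=\{[v]_\sim\mid [a]_\sim[v]_\sim\in U\}\subseteq C$; iterating, $L(U)u^{-1}\in B$ for all $u$. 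Hence $B\in\mathcal{P}\mathcal{L}$.

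Finally I would observe that $\ell$ is an isomorphism of $F_2$-coalgebras $\muPL(\delta,c)\xrightarrow{\ \cong\ }B$: it is a bijection onto $B$ (injectivity by the preceding proposition, surjectivity by definition of $B$); it intertwines the transition maps by the computation $L(\widehat\sigma(U)(a))=a^{-1}L(U)$ above, the left-derivative transition being the $F_2$-coalgebra structure carried by $B$; and it preserves and reflects accepting states, since $[\epsilon]_\sim\in U\iff\epsilon\in L(U)$. The one genuinely delicate point is the right-derivative step: it hinges on keeping the reversal $(\cdot)^r$ straight and on $\sim$ being a congruence on the free monoid, which is precisely what lets the class of $a\,w^r$ factor and thereby absorb the extra letter into the subset of $C$; everything else is bookkeeping.
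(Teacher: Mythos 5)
Your proof is correct and takes essentially the same route as the paper's (which only sketches these steps): identify the state space with $P(\Sigma^\ast/{\sim})$, use $L(U)=\{u\mid [u^r]\in U\}$ from the minimality proposition to get an injective, Boolean-operation-preserving map onto a complete atomic Boolean subalgebra of $2^{\Sigma^\ast}$, get one derivative from the transition structure and the other from $\sim$ being a monoid congruence. The only divergence is terminological: you call the transition-induced derivative the left one and the congruence-induced one the right one, while the paper's wording swaps these labels; since you verify closure under both, this is immaterial.
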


\begin{proof}
  From the definition of $\muPL$ and the previous proposition, it is clear
  that $\muPL$ is a complete atomic Boolean algebra and isomorphic to a
  complete atomic Boolean algebra of languages with
  $L(U)\cup L(V)=L(U\cup V)$,  $L(U)\cap L(V)=L(U\cap V)$ 
  and $\overline{L(U)}=L(\overline{U})$. Moreover, it has a right
  derivative given by the transition function defined on it. It is easy to
  see that one can also define a left derivative on it as the generators
  (atoms) are congruences. Both derivatives coincide with the standard derivatives
  on languages. Hence $\muPL(\delta,c)$ is (isomorphic to)
  a preformation of languages.
\end{proof}

In light of this, we may think of the elements of $\muPL(\delta,c)$ as languages.
In particular, if we say that a language belongs to $\muPL(\delta,c)$, it means
that there exists a state in $\muPL(\delta,c)$ whose language is the same.

\begin{prop}
  Let $(X,\delta,c)$ be an accepting automaton. For any $x\in X$,
  $L(x,c)$ belongs to $\muPL(\delta,c)$.
\end{prop}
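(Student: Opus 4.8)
The plan is to exhibit, for a fixed state $x\in X$, one state of $\muPL(\delta,c)$ whose accepted language is exactly $L(x,c)$; by the convention fixed just before the statement, this is precisely what it means for $L(x,c)$ to belong to $\muPL(\delta,c)$. Recall from the table that a state of $\muPL(\delta,c)$ is a subset $U\subseteq \Sigma^\ast/{\approx}$, where $\approx\;=\;\ker\bigl(\restr{\widehat{\delta}}{\langle c\rangle}\bigr)^\sharp$, the map $\widehat{\delta}$ is the reversed transition structure on subsets with $\widehat{\delta}(S)(a)=\{y\mid \delta(y)(a)\in S\}$, and $\langle c\rangle$ is the set of subsets reachable from $c$. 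The candidate I would use is the image of $x$ under the unit,
\[
  U \;=\; \eta_{(\delta,c)}(x) \;=\; \{[u]\mid \delta(x)(u^r)\in c\}\;=\;\{[u]\mid u^r\in L(x,c)\}.
\]

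The one point that needs care is that $U$ is well-defined as a set of $\approx$-classes, i.e.\ that $u\approx u'$ implies $u^r\in L(x,c)\iff u'^r\in L(x,c)$. By an easy induction on $w$ from the one-letter case, the extension of $\widehat{\delta}$ to words satisfies $\widehat{\delta}(S)(w)=\{y\mid \delta(y)(w^r)\in S\}$. By the definition of $\approx$, $u\approx u'$ means $\widehat{\delta}(S)(u)=\widehat{\delta}(S)(u')$ for every $S\in\langle c\rangle$; since $c=\widehat{\delta}(c)(\epsilon)$ lies in $\langle c\rangle$, instantiating at $S=c$ gives $\{y\mid \delta(y)(u^r)\in c\}=\{y\mid \delta(y)(u'^r)\in c\}$, and testing the fixed state $x$ for membership in both sides yields $\delta(x)(u^r)\in c\iff \delta(x)(u'^r)\in c$, as wanted — this uses nothing beyond $c$ being a reachable subset. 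It then remains to compute the language of $U$: by the previous proposition $L(U)=\{v\mid [v^r]\in U\}$, and by well-definedness $[v^r]\in U\iff \delta(x)\bigl((v^r)^r\bigr)\in c\iff \delta(x)(v)\in c\iff v\in L(x,c)$, so $L(U)=L(x,c)$ and therefore $L(x,c)$ belongs to $\muPL(\delta,c)$.

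I would also note that this outcome is structurally forced: the chosen $U$ is literally $\eta_{(\delta,c)}(x)$, and $\eta_{(\delta,c)}$ is a morphism of $F_2$-coalgebras, hence preserves the language accepted from each state, so $L(x,c)=L(\eta_{(\delta,c)}(x))$ without any computation; one could take this as the whole proof, with the explicit computation above just an unravelling of it. The only real obstacle is the bookkeeping around the reversal $(-)^r$ introduced by the contravariant powerset, and it is concentrated in the well-definedness check for $U$, which is exactly the place where $c\in\langle c\rangle$ gets used.
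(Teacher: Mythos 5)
Your proof is correct and follows essentially the same route as the paper: the paper's argument is exactly your closing remark, namely that the unit $\eta_{(\delta,c)}$ exists as a morphism into $\muPL(\delta,c)$ and therefore $L(x,c)=L(\eta_{(\delta,c)}(x))$. Your explicit well-definedness check (using $c\in\langle c\rangle$) and the computation via $L(U)=\{v\mid [v^r]\in U\}$ just unfold the details the paper leaves implicit.
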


\begin{proof}
  This follows simply from the existence of the unit $\mu_{(\delta,c)}:X\to \muPL(\delta,c)$.
\end{proof}

\begin{prop}\label{prop:muPLSmallestPL}
  Let $(X,\delta,c)$ be an accepting automaton. Then
  $\muPL(\delta,c)$ is (isomorphic to) the smallest preformation of languages
  including $\{L(x,c)\mid x\in X\}$.
\end{prop}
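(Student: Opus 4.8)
The plan is to use the two preceding propositions to get half of the statement for free. By those, $\muPL(\delta,c)$ is, via the map $U\mapsto L(U)$, (isomorphic to) a preformation of languages and it contains every $L(x,c)$. Hence, to show it is the \emph{least} such preformation, it suffices to prove the one remaining inclusion: $\muPL(\delta,c)$ (as a set of languages) is contained in every preformation $\mathcal{Q}\subseteq 2^{\Sigma^\ast}$ that includes $\{L(x,c)\mid x\in X\}$.

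For that inclusion I would first reduce to atoms. The carrier of $\muPL(\delta,c)$ is the full powerset $P(\Sigma^\ast/{\sim})$ with $\sim=\ker(\restr{\widehat{\delta}}{\langle c\rangle})^\sharp$, so it is a complete atomic Boolean algebra whose atoms are the singletons $\{[w]_\sim\}$; by the minimality proposition the map $U\mapsto L(U)$ is injective, and from its explicit formula it clearly preserves arbitrary unions. Thus every language in $\muPL(\delta,c)$ is a union of the atom-languages $L(\{[w]_\sim\})$, and since $\mathcal{Q}$ is closed under arbitrary unions it is enough to show $L(\{[w]_\sim\})\in\mathcal{Q}$ for each class $[w]_\sim$.

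The crux is therefore to express this atom-language in terms of the $L(x,c)$. Writing $\tau(v)=\{x\in X\mid v\in L(x,c)\}$, I would start from $L(U)=\{u\mid [u^r]_\sim\in U\}$ (minimality proposition) to get $L(\{[w]_\sim\})=\{u\mid u^r\sim w\}$, and then unwind $\sim$: since $\widehat{\delta}$ acts on $2^X$ by preimages and the reachable part is $\langle c\rangle=\{\tau(t)\mid t\in\Sigma^\ast\}$, one checks that $u^r\sim w$ holds iff $\tau(ut)=\tau(w^rt)$ for all $t\in\Sigma^\ast$, i.e.
\[
  L(\{[w]_\sim\})=\{u\in\Sigma^\ast\mid \forall t\in\Sigma^\ast\ \forall x\in X:\ ut\in L(x,c)\iff w^rt\in L(x,c)\}.
\]
Granting this identity, for fixed $x$ and $t$ the set $\{u\mid ut\in L(x,c)\iff w^rt\in L(x,c)\}$ is, depending on the (fixed) truth value of $w^rt\in L(x,c)$, either a derivative of $L(x,c)$ by $t$ or its complement, and hence lies in $\mathcal{Q}$; intersecting over all $x$ and $t$ and using completeness of $\mathcal{Q}$ gives $L(\{[w]_\sim\})\in\mathcal{Q}$, which finishes the proof.

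The main obstacle I anticipate is purely the bookkeeping in that displayed identity. Because $\muPL$ is assembled by applying the reverse-powerset functor $\widehat{P}$ twice, both the congruence $\sim$ and the language map $L(-)$ carry word reversals, and one has to track these carefully — together with the identification $\langle c\rangle=\{\tau(t)\mid t\in\Sigma^\ast\}$ — in order to land on a clean statement mentioning only the original languages $L(x,c)$. Everything after the identity is routine Boolean-algebra manipulation that uses nothing about $\mathcal{Q}$ beyond its being a complete Boolean algebra closed under derivatives.
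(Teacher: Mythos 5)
Your proposal is correct and follows essentially the same route as the paper: after invoking the two preceding propositions, you reduce to the atoms $\{[w]\}$ and express each atom-language as an arbitrary intersection of word-derivatives of the languages $L(x,c)$ and their complements, which is exactly the paper's Boolean-combination argument (your $\{u\mid ut\in L(x,c)\}$ is the paper's $L(x,U)$ for $U=\tau(t)\in\left<c\right>$). The key identity you flag as the crux does hold with the reversal bookkeeping as you state it, so there is no gap.
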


\begin{proof}
  We have already established that $\muPL(\delta,c)$ is a preformation of
  languages which includes $L(x,c)$ for any $x\in X$.
  We wish to show that $\muPL(\delta,c)$ is the minimal preformation
  of languages having this property. To do this we first note that if
  $L(x,c)$ belongs to a preformation of languages, so does $\overline{L(x,c)}$
  and $L(x,\widehat{\delta}(c)(a))$. The first follows as a preformation of
  languages is closed under complement, and the second as it is closed under
  right and left derivatives. In particular, one has
  $u\in L(x,\widehat{\delta}(c)(a)) \iff ua\in L(x,c) \iff u\in \prescript{}{a}{L(x,c)}$.
  Our goal is to show that any atom in $\muPL(\delta,c)$ can be written as a Boolean
  combination of such languages. Then any preformation of languages satisfying
  the requirements from the proposition must include the atoms of
  $\muPL(\delta,c)$ and hence completely embed it.
  We recall that the atoms are of the shape $\{[w]\}$ where the equivalence
  classes were generated by $\ker \left(\restr{\widehat{\delta}}{\left<c \right>}\right)^\sharp$.
  Let $u\in \Sigma^\ast$. Then
  \allowdisplaybreaks{
  \begin{align*}
    u\in L(\{[w]\}) &\iff [u^r]=[w] \\
                    &\iff \forall U\in \left<c\right>: \widehat{\delta}(U)(u^r)=\widehat{\delta}(U)(w)\\
                    &\iff \forall x\in X,U\in \left<c\right>: \delta(x,u)\in U \iff \delta(x,w^r)\in U\\
                    &\iff \forall x\in X: \left(\forall\ \delta(x,w^r)\in U\in \left<c \right>: \delta(x,u)\in U\right)\land \left(\forall\ \delta(x,w^r)\not\in U\in \left<c \right>: \delta(x,u)\not\in U\right) \\
                    &\iff \forall x\in X: u\in \bigcap_{\substack{U\in \left<c \right>\\ \delta(x,w^r)\in U}} L(x,U) \text{ and } u\not\in \bigcup_{\substack{U\in \left<c \right>\\ \delta(x,w^r)\not\in U}} L(x,U)\\
                    &\iff \forall x\in X: u\in \bigcap_{\substack{U\in \left<c \right>\\ \delta(x,w^r)\in U}} L(x,U) \text{ and } u\in \bigcap_{\substack{U\in \left<c \right>\\ \delta(x,w^r)\not\in U}} \overline{L(x,U)}\\
                    &\iff u\in \bigcap_{x\in X}\left(\bigcap_{\substack{U\in \left<c \right>\\ \delta(x,w^r)\in U}} L(x,U) \cap \bigcap_{\substack{U\in \left<c \right>\\ \delta(x,w^r)\not\in U}} \overline{L(x,U)}\right) 
  \end{align*}
}
  This concludes the proof.
\end{proof}

We are now in a position where we can make the link between $\muPL$ and
$\mathsf{cofree}$.

\begin{cor}\label{cor:embedCofree}
  Let $(X,\delta,c)$ be an accepting automaton such that the smallest Boolean
  subalgebra of $P(X)$ containing $\left<c \right>$ is $P(X)$ itelf.
  Then $\mathsf{cofree}(X,\delta)$ can entirely be embedded in
  $\muPL(\delta,c)$.
\end{cor}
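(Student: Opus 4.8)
The plan is to reduce the statement to a check on the generators of the two preformations, using the concrete description of $\muPL(\delta,c)$ supplied by Proposition~\ref{prop:muPLSmallestPL}. Recall from \cite{ballester:2015:dualEquivalenceOfEquationsAndCoequations} that $\mathsf{cofree}(X,\delta)$ is (isomorphic to) the smallest preformation of languages containing $L(x,S)$ for every state $x\in X$ and every subset $S\subseteq X$, i.e. the least preformation satisfied by $(X,\delta)$ under all choices of accepting states. By the propositions above, $\muPL(\delta,c)$ is itself (isomorphic to) a preformation of languages and it contains $L(x,c)$ for each $x\in X$ (via the unit $X\to\muPL(\delta,c)$). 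Since $\mathsf{cofree}(X,\delta)$ is the \emph{smallest} preformation containing the languages $L(x,S)$, it suffices to show that every such $L(x,S)$ already belongs to $\muPL(\delta,c)$: the inclusion $\mathsf{cofree}(X,\delta)\subseteq\muPL(\delta,c)$, hence the desired embedding, then follows at once because $\muPL(\delta,c)$ is a preformation.

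First I would dispose of the languages $L(x,U)$ with $U\in\left<c\right>$. A direct computation gives, for $a\in\Sigma$, $\bigl(L(x,c)\bigr)_a = L(x,\widehat\delta(c)(a))$ for the right language derivative, and iterating shows that the iterated right derivatives of $L(x,c)$ are exactly the languages $L(x,U)$ with $U$ reachable from $c$ under $\widehat\delta$, i.e. with $U\in\left<c\right>$ (the apparent reversal is harmless since the index runs over all of $\left<c\right>$ anyway; this is also implicit in the atom computation in the proof of Proposition~\ref{prop:muPLSmallestPL}, where the atoms of $\muPL(\delta,c)$ are exhibited as Boolean combinations of precisely such languages $L(x,U)$). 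As $\muPL(\delta,c)$ is closed under right derivatives, it therefore contains $L(x,U)$ for all $x\in X$ and all $U\in\left<c\right>$.

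Next I would invoke the hypothesis. The assignment $S\mapsto L(x,S)=\{u\mid\delta(x,u)\in S\}$ is a Boolean-algebra homomorphism $P(X)\to 2^{\Sigma^\ast}$, since $L(x,S\cup S')=L(x,S)\cup L(x,S')$, $L(x,\overline S)=\overline{L(x,S)}$, $L(x,\emptyset)=\emptyset$ and $L(x,X)=\Sigma^\ast$. By assumption the Boolean subalgebra of $P(X)$ generated by $\left<c\right>$ is all of $P(X)$, so any $S\subseteq X$ is a finite Boolean combination of sets $U_1,\dots,U_k\in\left<c\right>$; applying the homomorphism $L(x,-)$ shows that $L(x,S)$ is the same Boolean combination of $L(x,U_1),\dots,L(x,U_k)$. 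As each $L(x,U_i)\in\muPL(\delta,c)$ and $\muPL(\delta,c)$ is closed under the Boolean operations, we conclude $L(x,S)\in\muPL(\delta,c)$ for every $x\in X$ and $S\subseteq X$, which is what was needed. (Inspecting atoms, one in fact sees that under this hypothesis the two preformations coincide; in general only $\muPL(\delta,c)\subseteq\mathsf{cofree}(X,\delta)$ holds.)

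The only delicate point is the first step: one must match the reversal baked into $\widehat\delta$ and into the transition structure of $\muPL(\delta,c)$ against the left/right derivatives on languages, so as to be sure that the iterated derivatives of $L(x,c)$ really index all of $\left<c\right>$ — but this bookkeeping is essentially already carried out inside the proof of Proposition~\ref{prop:muPLSmallestPL}. Everything else is a routine consequence of $L(x,-)$ being a Boolean homomorphism together with the closure properties in the definition of $\mathcal{P}\mathcal{L}$.
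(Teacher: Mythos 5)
Your proof is correct and follows essentially the same route as the paper: show $L(x,U)\in\muPL(\delta,c)$ for all $U\in\langle c\rangle$ via derivatives of $L(x,c)$, then use the hypothesis together with the fact that $L(x,-)$ commutes with Boolean operations to conclude $L(x,S)\in\muPL(\delta,c)$ for every $S\subseteq X$, hence every language of $\mathsf{cofree}(X,\delta)$ lies in $\muPL(\delta,c)$. Only your closing parenthetical is off: in the paper $\mathsf{cofree}(X,\delta)$ is just a subautomaton of $2^{\Sigma^\ast}$ (in the worked example it has four states, containing neither $\emptyset$ nor $\Sigma^\ast$), so it is not itself a preformation and neither $\muPL(\delta,c)\subseteq\mathsf{cofree}(X,\delta)$ nor coincidence holds literally --- but this aside does not affect the embedding you prove.
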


\begin{proof}
  If the assumption holds, any subset $U\subseteq X$ is a Boolean combination of subsets
  reachable from $c$ via $\widehat{\delta}$. This means that $\muPL(\delta,c)$
  contains $L(x,U)$ for any $x\in X$ and any $U\subseteq X$, so in particular it
  contains any language in $\mathsf{cofree}(X,\delta)$.
\end{proof}

\subsection{An Example}

Before concluding the section, we compute $\nuC$ and $\muPL$ for the concrete
DFA shown below (which can also be found in \cite{ballester:2015:dualEquivalenceOfEquationsAndCoequations}).

\begin{center}
\begin{tikzpicture}[modal]
\node[state] at (0,0) (1) [label=above:{}] {$x$};
\node[state] at (3,0) (2) [label=above:{}] {$y$};
\path[->] (1) edge[bend left] node[]{$a$} (2);
\path[->] (1) edge[reflexive left] node[]{$b$} (1);
\path[->] (2) edge[bend left] node[]{$b$} (1);
\path[->] (2) edge[reflexive right] node[right]{$a$} (2);
\end{tikzpicture}
\end{center}

As the automaton is reachable regardless of the choice of initial state,
the automata $\nuC(x,\delta)$, $\nuC(y,\delta)$ and $\mathsf{free}(\delta)$ 
are all isomorphic. We depict $\nuC(x,\delta)$ below
on the left ($\sim=\ker \delta^\sharp$) and $\nuC(x,\delta,\{x\})$ on the
right.

\begin{center}
\begin{tikzpicture}[modal]
  \node[initial,state] at (1.5,2) (0) [label=above:{}] {$[\epsilon]_\sim$};
  \node[state] at (0,0) (1) [label=above:{}] {$[b]_\sim$};
  \node[state] at (3,0) (2) [label=above:{}] {$[a]_\sim$};
\path[->] (0) edge[] node[above left]{$b$} (1);
\path[->] (0) edge[] node[]{$a$} (2);
\path[->] (1) edge[bend left] node[]{$a$} (2);
\path[->] (1) edge[reflexive left] node[]{$b$} (1);
\path[->] (2) edge[bend left] node[]{$b$} (1);
\path[->] (2) edge[reflexive right] node[right]{$a$} (2);

  \node[initial,state,accepting] at (9.5,2) (0b) [label=above:{}] {$[\epsilon]_\sim$};
  \node[state,accepting] at (8,0) (1b) [label=above:{}] {$[b]_\sim$};
  \node[state] at (11,0) (2b) [label=above:{}] {$[a]_\sim$};
\path[->] (0b) edge[] node[above left]{$b$} (1b);
\path[->] (0b) edge[] node[]{$a$} (2b);
\path[->] (1b) edge[bend left] node[]{$a$} (2b);
\path[->] (1b) edge[reflexive left] node[]{$b$} (1b);
\path[->] (2b) edge[bend left] node[]{$b$} (1b);
\path[->] (2b) edge[reflexive right] node[right]{$a$} (2b);
\end{tikzpicture}
\end{center}

Next we compute $\muPL(\delta,\{x\})$. This can be done in several stages.
First we apply the lifted powerset functor to our accepting automaton, which
results in the pointed automaton to the left. After applying $\nuC$ to this
automaton we obtain the automaton on the right ($\approx =\ker \widehat{\delta}^\sharp$).

\begin{center}
\begin{tikzpicture}[modal]
\node[state,initial,ellipse] at (0,0) (0) [label=above:{}] {$\{x\} $};
\node[state,ellipse] at (3,0) (1) [label=above:{}] {$\emptyset$};
\node[state,ellipse] at (0,3) (2) [label=above:{}] {$\{x,y\} $};
\node[state,ellipse] at (3,3) (3) [label=above:{}] {$\{y\} $};
\path[->] (0) edge node[below]{$a$} (1);
\path[->] (0) edge node[]{$b$} (2);
\path[->] (1) edge[reflexive right] node[right]{$a,b$} (1);
\path[->] (2) edge[reflexive left] node[]{$a,b$} (2);
\path[->] (3) edge node[above]{$a$} (2);
\path[->] (3) edge node[]{$b$} (1);

\node[initial,state,accepting] at (9.5,2.5) (0b) [label=above:{}] {$[\epsilon]_{\approx}$};
  \node[state,accepting] at (8,0.5) (1b) [label=above:{}] {$[b]_{\approx}$};
  \node[state] at (11,0.5) (2b) [label=above:{}] {$[a]_{\approx}$};
\path[->] (0b) edge[] node[above left]{$b$} (1b);
\path[->] (0b) edge[] node[]{$a$} (2b);
\path[->] (1b) edge[reflexive left] node[]{$a,b$} (1b);
\path[->] (2b) edge[reflexive right] node[right]{$a,b$} (2b);

\end{tikzpicture}
\end{center}

Finally, we apply the lifted contravariant powerset functor once more
to obtain $\muPL(\delta,\{x\})$ (on the left) and the corresponding
preformation of languages (on the right).

\begin{center}
\begin{tikzpicture}[modal]
\node[state,ellipse] at (3,0) (0) [label=above:{}] {$\emptyset$};
\node[state,ellipse] at (0,2) (1) [label=above:{}] {$\{[a]\} $};
\node[state,accepting,ellipse] at (3,2) (2) [label=above:{}] {$\{[\epsilon]\} $};
\node[state,ellipse] at (6,2) (3) [label=above:{}] {$\{[b]\} $};
\node[state,accepting,ellipse] at (0,4) (4) [label=above:{}] {$\{[a],[\epsilon]\} $};
\node[state,ellipse] at (3,4) (5) [label=above:{}] {$\{[a],[b]\} $};
\node[state,accepting,ellipse] at (6,4) (6) [label=above:{}] {$\{[\epsilon],[b]\} $};
\node[state,accepting,ellipse] at (3,6) (7) [label=above:{}] {$\{[a],[\epsilon],[b]\} $};
\path[->,red] (0) edge[reflexive below] node[]{$a,b$} (0);
\path[->,red] (1) edge[bend left] node[]{$a$} (4);
\path[->,red] (1) edge[reflexive below] node[]{$b$} (1);
\path[->,red] (2) edge node[]{$a,b$} (0);
\path[->,red] (3) edge[reflexive below] node[below]{$a$} (3);
\path[->,red] (3) edge[bend left] node[]{$b$} (6);
\path[->,red] (4) edge[reflexive above] node[above]{$a$} (4);
\path[->,red] (4) edge[bend left] node[]{$b$} (1);
\path[->,red] (5) edge node[]{$a,b$} (7);
\path[->,red] (6) edge[bend left] node[]{$a$} (3);
\path[->,red] (6) edge[reflexive above] node[above]{$b$} (6);
\path[->,red] (7) edge[reflexive above] node[above]{$a,b$} (7);

\path[-] (0) edge (1);
\path[-] (0) edge (2);
\path[-] (0) edge (3);
\path[-] (1) edge (4);
\path[-] (1) edge (5);
\path[-] (2) edge (4);
\path[-] (2) edge (6);
\path[-] (3) edge (5);
\path[-] (3) edge (6);
\path[-] (4) edge (7);
\path[-] (5) edge (7);
\path[-] (6) edge (7);

\node[state,ellipse] at (11,0.75) (0b) [label=above:{}] {$\emptyset$};
\node[state,ellipse] at (9,2.25) (1b) [label=above:{}] {$(b^\ast a)^+$};
\node[state,accepting,ellipse] at (11,2.25) (2b) [label=above:{}] {$1$};
\node[state,ellipse] at (13,2.25) (3b) [label=above:{}] {$(a^\ast b)^+$};
\node[state,accepting,ellipse] at (9,3.75) (4b) [label=above:{}] {$(b^\ast a)^\ast$};
\node[state,ellipse] at (11,3.75) (5b) [label=above:{}] {$\Sigma^+$};
\node[state,accepting,ellipse] at (13,3.75) (6b) [label=above:{}] {$(a^\ast b)^\ast$};
\node[state,accepting,ellipse] at (11,5.25) (7b) [label=above:{}] {$\Sigma^\ast$};
\path[->,red] (0b) edge[reflexive below] node[]{$a,b$} (0b);
\path[->,red] (1b) edge[bend left] node[]{$a$} (4b);
\path[->,red] (1b) edge[reflexive below] node[]{$b$} (1b);
\path[->,red] (2b) edge node[]{$a,b$} (0b);
\path[->,red] (3b) edge[reflexive below] node[below]{$a$} (3b);
\path[->,red] (3b) edge[bend left] node[]{$b$} (6b);
\path[->,red] (4b) edge[reflexive above] node[above]{$a$} (4b);
\path[->,red] (4b) edge[bend left] node[]{$b$} (1b);
\path[->,red] (5b) edge node[]{$a,b$} (7b);
\path[->,red] (6b) edge[bend left] node[]{$a$} (3b);
\path[->,red] (6b) edge[reflexive above] node[above]{$b$} (6b);
\path[->,red] (7b) edge[reflexive above] node[above]{$a,b$} (7b);

\path[-] (0b) edge (1b);
\path[-] (0b) edge (2b);
\path[-] (0b) edge (3b);
\path[-] (1b) edge (4b);
\path[-] (1b) edge (5b);
\path[-] (2b) edge (4b);
\path[-] (2b) edge (6b);
\path[-] (3b) edge (5b);
\path[-] (3b) edge (6b);
\path[-] (4b) edge (7b);
\path[-] (5b) edge (7b);
\path[-] (6b) edge (7b);
\end{tikzpicture}
\end{center}

Interestingly, Corollary \ref{cor:embedCofree} applies to $\muPL(\delta,c)$, hence
$\mathsf{cofree}(\delta)$ is embedded in the preformation of languages
given above ($\mathsf{cofree}(\delta)$ has state space $\{(b^\ast a)^+,(b^\ast a)^\ast,(a^\ast b)^+,(a^\ast b)^\ast\}$).
We also illustrate how to compute the atom $\{[\epsilon]\}$ from the languages
in $\mathsf{cofree}(\delta)$ as in the proof of Proposition \ref{prop:muPLSmallestPL}.
We have that
\begin{align*}
  L(\{[\epsilon]\}) &= \bigcap_{x\in X}\left(\bigcap_{\substack{U\subseteq X\\ x\in U}} L(x,U) \cap \bigcap_{\substack{U\subseteq X\\ x\not\in U}} \overline{L(x,U)}\right) \\
                    &= \left(L(x,\{x\})\cap L(x,\{x,y\} \cap \overline{L(x,\emptyset)} \cap \overline{L(x,\{y\})}\right)\cap \left(L(y,\{y\})\cap L(y,\{x,y\} \cap \overline{L(y,\emptyset)} \cap \overline{L(y,\{x\})}\right)\\
                    &= L(x,\{x\})\cap L(y,\{y\}).
\end{align*}

\section{Instantiating to Lasso Automata}\label{section:LassoAutomata}

After having established a clear picture for deterministic automata, our goal
of this section is to do the same for lasso automata and the closely related
$\Omega$-automata. Lasso automata are very similar to DFAs, in fact,
one may view them as a DFA with an extended alphabet \cite{calbrix:1994:ultimatelyPeriodicWords}. They operate on lassos,
that is, pairs of words  $(u,v)\in\Sigma^\ast\times\Sigma^+$.

Our primary interest in studying these structures is their close relationship
to $\Omega$-automata, which are lasso automata that have some additional
structural properties, allowing them to be used as acceptors of $\omega$-languages.
Given an ultimately periodic word $uv^\omega$, we check whether it is accepted
by an $\Omega$-automata, by checking whether it accepts the lasso $(u,v)$.
This is well-defined for $\Omega$-automata, as they don't make a difference
(in terms of acceptance) between lassos which represent the same ultimately
periodic word.

Through the transition-machine construction, we hope to gain further insights
into regular $\omega$-languages, and establish closer links between the
algebraic and coalgebraic study of $\omega$-languages.

Before we begin our constructions, we briefly introduce the picture we
are working in (these facts can be found in \cite{cruchten:2022:omegaAutomata}).
As was mentioned in the preliminaries, we stick to the same
names for the functors as we try to view the categorical picture as
a framework:

\begin{align*}
  G(X_1,X_2)&=(X_1\times \Sigma,X_1\times\Sigma+ X_2\times\Sigma) & F(X_1,X_2)&=(X_1^\Sigma\times X_2^\Sigma,X_2^\Sigma), \\
  G_1(X_1,X_2)&=(1,\emptyset)+G(X_1,X_2) & F_2(X_1,X_2)&=F(X_1,X_2)\times (1,2).
\end{align*}

As in the DFA setting, $\text{Alg}(G)$ is isomorphic to
$\text{CoAlg}(F)$ and $G \dashv F$.
The initial $G_1$-algebra has carrier $(\Sigma^\ast,\Sigma^{\ast,+})$, initial
state $\epsilon$ and the three transitions are given by
\[
  \sigma_1(u,a)= ua, \qquad \sigma_2(u,a)= (u,a), \qquad \sigma_3((u,v), a)= (u,va).
\] 
The terminal $F_2$-coalgebra has carrier $(2^{\Sigma^{\ast +}},2^{\Sigma^\ast})$,
final states $\{U\subseteq \Sigma^\ast\mid \epsilon\in U\}$ and transitions
(again given in order):
\[
  L \mapsto \lambda a. \{(u,v)\in\Sigma^{\ast +}\mid (au,v)\in L\}, \qquad  L \mapsto \lambda a. \{u \in\Sigma^{\ast}\mid (\epsilon,au)\in L\}, \qquad U \mapsto \lambda a. \{u\in\Sigma^{\ast}\mid au\in U\}.
\] 
We end up with the following setup.

\begin{center}
\begin{tikzpicture}[modal]
\node[] at (2,0) (0) [label=above:{}] {$((\Sigma^\ast)^\Sigma\times (\Sigma^{\ast +})^\Sigma, (\Sigma^{\ast +})^\Sigma)$};
\node[] at (7.5,0) (3) [label=above:{}] {$(X_1^\Sigma\times X_2^\Sigma,X_2^\Sigma)$};
\node[] at (13,0) (6) [label=above:{}] {$((2^{\Sigma^{\ast +}})^\Sigma\times (2^{\Sigma^\ast})^\Sigma, (2^{\Sigma^\ast})^\Sigma)$};
\node[] at (2,1.5) (7) [label=above:{}] {$(\Sigma^\ast, \Sigma^{\ast +})$};
\node[] at (7.5,1.5) (a) [label=above:{}] {$(X_1,X_2)$};
\node[] at (13,1.5) (d) [label=above:{}] {$(2^{\Sigma^{\ast +}}, 2^{\Sigma^\ast})$};
\node[] at (2,3) (e) [label=above:{}] {$(1,0)$};
\node[] at (13,3) (f) [label=above:{}] {$(1,2)$};
\path[->] (0) edge node[]{} (3);
\path[->] (3) edge node[]{} (6);
\path[->] (7) edge node[]{} (0);
\path[->] (7) edge node[]{} (a);
\path[->] (a) edge node[]{$\delta$} (3);
\path[->] (a) edge node[below]{} (d);
\path[->] (a) edge[bend left,looseness=0.5] node[]{$c$} (f);
\path[->] (d) edge node[]{} (6);
\path[->] (d) edge node[]{} (f);
\path[->] (e) edge node[]{} (7);
\path[->] (e) edge[bend left,looseness=0.5] node[]{$\overline{x}$} (a);
\end{tikzpicture}
\end{center}

Similarly to \cite{ballester:2015:dualEquivalenceOfEquationsAndCoequations}, one
can now define equations and coequations, and
for each lasso automaton $\mathsf{free}$ and $\mathsf{cofree}$. As we show later,
these notions don't just appear natural, but they behave as one would
suspect. In particular, we show that $\mathsf{Eq}(\left<L \right>)$ 
corresponds to the syntactic congruence of $L$.

\begin{defn}[\cite{ballester:2015:dualEquivalenceOfEquationsAndCoequations}]
  A \emph{set of equations} is a bisimulation equivalence relation
  $E=(E_1,E_2)\subseteq (\Sigma^\ast,\Sigma^{\ast +})$ on the lasso automaton
  $(X_1,X_2,\delta_1,\delta_2,\delta_3)$. For $x\in X_1$, we define
  $(X,x) \models E$ (the pointed lasso automaton satisfies $E$) by
  \[
    (X,x) \models E \iff \forall u,v\in E_1: \delta_1(x,u)=\delta_1(x,v) \text{ and }\forall (u,v),(u',v')\in E_2: \delta(x,(u,v))=\delta(x,(u',v')).
  \] 
  The lasso automaton satisfies $E$, $X\models E$, iff it satisfies $E$ for all $x\in X_1$.
\end{defn}

\begin{defn}[\cite{ballester:2015:dualEquivalenceOfEquationsAndCoequations}]
  A \emph{set of coequations} is a subautomaton $D=(D_1,D_2)\subseteq (2^{\Sigma^{\ast +}},2^{\Sigma^\ast})$.
  We define $(X,c)\models D$ (the accepting automaton satisfies $D$) by
  \[
    (X,c)\models D \iff \forall x\in X_1: L(x)\in D_1.
  \] 
  The lasso automaton satisfies $D$ iff $(X,c)\models D$ for all $c\subseteq X_2$.
\end{defn}

With this terminology in place, we also get a largest set of equations
satisfied by a lasso automaton $(X_1,X_2,\delta_1,\delta_2,\delta_3)$ which
we denote by $\mathsf{Eq}(X)$ and dually also a smallest set of coequations which
we denote $\mathsf{CoEq}(X)$. We do not provide specific constructions for
$\mathsf{free}$ and $\mathsf{cofree}$ in this paper.

\subsection{Transition and Machine Constructions Anew}

Due to how $\mathsf{Eq}(X)$ is defined, $\mathsf{free}(X)$ comes with certain
algebraic structure. In this section, we look at what that structure is
and give a transition and machine construction, which forms
an adjunction. In that way, we obtain two functors $\nuC$ and $\muPL$, 
which respectively form a comonad and a monad. These constructions are not
very involved and our methodology remains unchanged.

We start by defining some very minimal and natural structure on the
carrier of the initial algebra. Let
$\cdot:\Sigma^\ast\times \Sigma^\ast \to \Sigma^\ast$ be concatenation
of words and $\times: \Sigma^\ast \times\Sigma^{\ast +}\to \Sigma^{\ast +}$ 
be given by $u\times (v,w)=(uv,w)$. We often write $\cdot$ for $\times$ when
this does not lead to confusion. Any largest set of equations respects these two
operations.

\begin{prop}
  For any lasso automaton $X$, $\mathsf{Eq}\left( X \right)$ is a congruence (in the above sense) on $(\Sigma^\ast,\Sigma^{\ast +})$.
\end{prop}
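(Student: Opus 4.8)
The plan is to unfold the definition of $\mathsf{Eq}(X)$ as the largest bisimulation equivalence $E=(E_1,E_2)$ satisfied by $X$, and check that it is closed under the two operations $\cdot$ and $\times$. Concretely, I would argue as follows. Suppose $u\sim_{E_1} v$ (meaning $(u,v)\in E_1$) and let $w\in\Sigma^\ast$ be arbitrary; I want $wu\sim_{E_1} wv$ and $uw\sim_{E_1} vw$. For the right-multiplication $uw\sim_{E_1} vw$, this should follow directly from $E$ being a \emph{bisimulation}: since $E$ is closed under the transition $\sigma_1$ of the initial algebra (reading letters on the right), $(u,v)\in E_1$ forces $(ua,va)\in E_1$ for each $a\in\Sigma$, and iterating over the letters of $w$ gives $(uw,vw)\in E_1$. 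For the left-multiplication $wu\sim_{E_1} wv$, I would instead use that $X$ \emph{satisfies} $E$: for every state $x\in X_1$ and every $(u,v)\in E_1$ we have $\delta_1(x,u)=\delta_1(x,v)$; applying this at the state $\delta_1(x,w)$ gives $\delta_1(x,wu)=\delta_1(\delta_1(x,w),u)=\delta_1(\delta_1(x,w),v)=\delta_1(x,wv)$, so the relation $E_1\cup\{(wu,wv)\}$ (together with the appropriate closure in the second component) is still satisfied by $X$; by maximality of $\mathsf{Eq}(X)$, $(wu,wv)\in E_1$.

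For the operation $\times:\Sigma^\ast\times\Sigma^{\ast+}\to\Sigma^{\ast+}$, $u\times(v,w)=(uv,w)$, I need: if $(u,v)\in E_1$ and $a\in\Sigma^\ast$ then $(au,av)\in$ — wait, rather, if $((v,w),(v',w'))\in E_2$ and $a\in\Sigma^\ast$ then $((av,w),(av',w'))\in E_2$, and also if $(u,u')\in E_1$ then $((uv,w),(u'v,w))\in E_2$ for the relevant lassos. The first of these again follows from satisfaction: for $x\in X_1$, $\delta(x,(av,w))=\delta(\delta_1(x,a),(v,w))=\delta(\delta_1(x,a),(v',w'))=\delta(x,(av',w'))$ using $((v,w),(v',w'))\in E_2$ at the state $\delta_1(x,a)$, plus the analogous check that $\delta_1$ is respected where needed, so by maximality the enlarged relation is contained in $\mathsf{Eq}(X)$. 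The compatibility linking $E_1$-equivalent prefixes to $E_2$-equivalence of lassos is handled similarly, using $\delta(x,(uv,w))=\delta(\delta_1(x,u),(v,w))$ and $\delta_1(x,u)=\delta_1(x,u')$.

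The main point to be careful about — and the only real obstacle — is that $\mathsf{Eq}(X)=(E_1,E_2)$ is required to be a \emph{bisimulation equivalence}, not just any relation closed under the operations, so whenever I enlarge $E$ I must simultaneously close it under the transitions of the initial algebra (i.e. take the bisimulation equivalence generated by the enlarged relation) and then verify that this closure is still satisfied by $X$; maximality of $\mathsf{Eq}(X)$ then forces the new pairs into it. Because satisfaction is itself phrased in terms of the iterated transition maps $\delta_1$ and $\delta$, and these maps are monoid-action-like ($\delta_1(x,uw)=\delta_1(\delta_1(x,u),w)$, $\delta(x,(uv,w))=\delta(\delta_1(x,u),(v,w))$), the closure under transitions automatically lies within the satisfaction envelope, so the argument goes through without any delicate combinatorics. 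I would present the $E_1$-congruence claims and the $E_2$-congruence claims in parallel, noting that each reduces to the single observation that ``satisfaction is stable under prepending a word'' via the action identities above.
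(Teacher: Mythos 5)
Your proof is essentially correct, but it takes a more roundabout route than the paper, which omits the proof as a routine unravelling of definitions; the unravelling it has in mind is the direct one that surfaces later in the well-definedness proof for $T$ and in Lemma~\ref{lem:Eq}: $\mathsf{Eq}(X)$ is precisely the kernel pair $(\ker\delta_1^\sharp,\ker\delta^\sharp)$, i.e.\ $(u,v)\in\mathsf{Eq}(X)_1$ iff $\delta_1(x,u)=\delta_1(x,v)$ for all $x\in X_1$, and similarly on the lasso sort, after which the congruence laws for $\cdot$ and $\times$ are one-line consequences of the action identities $\delta_1(x,wu)=\delta_1(\delta_1(x,w),u)$ and $\delta(x,(wu,v))=\delta(\delta_1(x,w),(u,v))$. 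You instead argue by maximality: adjoin the desired pair, close to a bisimulation equivalence, check the closure is still satisfied, and invoke largeness of $\mathsf{Eq}(X)$. This is sound, but its load-bearing step --- ``the closure under transitions stays inside the satisfaction envelope'' --- is exactly the assertion that the kernel pair is itself a bisimulation equivalence, closed under $\sigma_1,\sigma_2,\sigma_3$ via $\delta_1(x,ua)=\delta_1(\delta_1(x,u),a)$, $\delta(x,(u,a))=\delta_2(\delta_1(x,u),a)$ and $\delta(x,(u,va))=\delta_3(\delta(x,(u,v)),a)$; once you state that explicitly, the envelope is a set of equations satisfied by $X$, hence equals $\mathsf{Eq}(X)$, and the whole enlarge-and-reclose manoeuvre becomes unnecessary. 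What your route buys is that it only uses the abstract characterisation of $\mathsf{Eq}(X)$ as a greatest element, never its concrete description; what the direct route buys is brevity and the identification $\mathsf{Eq}(X)=(\ker\delta_1^\sharp,\ker\delta^\sharp)$, which the paper needs anyway. Two small tidy-ups: to obtain the full congruence laws ($u\mathrel{E_1}u'$ and $v\mathrel{E_1}v'$ imply $uv\mathrel{E_1}u'v'$, and the analogous law for $\times$) you should explicitly chain your two one-sided closures by transitivity, and the false start in the paragraph on $\times$ should be removed.
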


\begin{proof}
  The proof consists mainly of unravelling definitions so we omit it.
\end{proof}

We show later that $\mathsf{Eq}(\left<L \right>)$ is the syntactic
congruence of $L$. By that we mean that $(\Sigma^{\ast},\Sigma^{\ast +})/{\mathsf{Eq}(\left<L \right>)}$ 
is isomorphic to $\left<L \right>$ and hence also minimal. This recovers the
classical situation where the minimal DFA corresponds to the syntactic
monoid and vice versa.

We move on to the transition and machine constructions. The categories
involved are that of reachable $G_1$-algebras and bisimulation congruences
over $(\Sigma^\ast,\Sigma^{\ast +})$ (which we just write $\mathcal{C}$).
By bisimulation congruence we mean a congruence which is also a bisimulation
equivalence on the automaton given by $(\Sigma^\ast,\Sigma^{\ast +})$.
We remark, that a congruence on the free  $G_1$-algebra (where we specifically
talk about congruences on algebras for an endofunctor), is also a bisimulation.
The congruence we have defined above does not correspond precisely to this
as we extended multiplication to words to equip the carrier with a certain
structure.

Our first claim is that $\mathcal{C}$ and $\text{Alg}_r(G_1)$
are thin.

\begin{lem}\label{lem:thin2}
  The category $\text{Alg}_r(G_1)$ of reachable $G_1$ algebras is thin.
\end{lem}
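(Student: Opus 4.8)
The plan is to mimic the proof of Lemma~\ref{lem:thin} almost verbatim, since the only thing that has changed is the shape of the functor $G_1$ (now over $\text{Set}^2$), while the essential idea — two parallel morphisms out of a reachable algebra must agree — is identical. So I would argue as follows: take two $G_1$-algebra homomorphisms $h,h':(X_1,X_2,\overline{x},\delta_1,\delta_2,\delta_3)\to(Y_1,Y_2,\overline{y},\delta'_1,\delta'_2,\delta'_3)$ and show $h=h'$ componentwise. Reachability for lasso automata means every state of $X_1$ is of the form $\delta_1(\overline{x},u)$ for some $u\in\Sigma^\ast$, and every state of $X_2$ is of the form $\delta(\overline{x},(u,v))=\delta_\circ(\delta_1(\overline{x},u),v)$ for some lasso $(u,v)$ — i.e. $X_2$ is reachable by running $\delta_1$ from $\overline{x}$ and then $\delta_2$ followed by some $\delta_3$-steps.

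For the $X_1$-component, I would use that $h$ preserves the initial state, $h(\overline{x})=\overline{y}$, and commutes with $\delta_1$, so for $x=\delta_1(\overline{x},u)$ we get $h(x)=\delta'_1(\overline{y},u)$, and likewise $h'(x)=\delta'_1(\overline{y},u)$; hence $h$ and $h'$ agree on $X_1$. For the $X_2$-component, I would use that $h$ commutes with $\delta_2$ and $\delta_3$ (equivalently with $\delta$), so for a state $y=\delta(\overline{x},(u,v))\in X_2$ we get $h(y)=\delta'(\overline{y},(u,v))=h'(y)$, using the $X_1$-agreement already established to handle the $\delta_1$-prefix. Thus $h=h'$ on both components, and $\text{Alg}_r(G_1)$ is thin. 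One should be slightly careful about exactly which compatibility squares a $G_1$-algebra morphism is required to satisfy given the coproduct shape of $G$ — the morphism must respect the three transition maps $\delta_1,\delta_2,\delta_3$ separately and send $\overline{x}$ to $\overline{y}$ — but once that is unpacked the argument is purely a reachability induction.

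I do not expect any genuine obstacle here; the statement is a direct analogue of Lemma~\ref{lem:thin} and the proof is "the same diagram chase with one more component". The only mildly delicate point is making precise the notion of reachability for a two-sorted $G_1$-algebra, namely that both $X_1$ and $X_2$ are generated from $\overline{x}$ via the transition maps, and that the coslice/morphism structure interacts correctly with the coproduct in the second component of $G$; but this is bookkeeping rather than mathematics. A reasonable write-up would simply say "the proof is completely analogous to that of Lemma~\ref{lem:thin}, using that reachability forces every state of $X_1$ (resp.\ $X_2$) to be of the form $\delta_1(\overline{x},u)$ (resp.\ $\delta(\overline{x},(u,v))$), on which any two morphisms must agree by preservation of $\overline{x}$ and the transitions."
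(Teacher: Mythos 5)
Your proposal is correct and matches the paper's own argument: the paper likewise takes two parallel morphisms, uses reachability to write each state of $X_1$ as $\delta_1(\overline{x})(w)$ and each state of $X_2$ as $\delta(\overline{x})(u,v)$, and concludes agreement componentwise from preservation of the initial state and the transition maps. No gaps to report.
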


\begin{proof}
  The claim is clear for $\mathcal{C}$, so we only show it for $\text{Alg}_r(G_1)$.
  Let $f,g:(X,\overline{x},\delta_{1,X},\delta_{2,X},\delta_{3,X})\to (Y,\overline{y},\delta_{1,Y},\delta_{2,Y},\delta_{3,Y})$, $x\in X_1$ and $x'\in X_2$. By reachability we can find
  $w\in \Sigma^\ast$ and $(u,v)\in \Sigma^{\ast +}$ such that
  $x=\delta_{1,X}(\overline{x})(w)$ and $x'=\delta_X(\overline{x})(u,v)$. Then
  \begin{align*}
    f_1(x)&=f_1(\delta_{1,X}(\overline{x})(w))=\delta_{1,Y}(\overline{y})(w)=\ldots = g_1(x) \\
    f_2(x')&=f_2(\delta_{X}(\overline{x})(u,v))=\delta_Y(\overline{y})(u,v)=\ldots =g_2(x').
  \end{align*}
  Hence there can only be at most one morphism between any two reachable
  pointed lasso automata and  $\text{Alg}_r(G_1)$ is thin.
\end{proof}

One may also check that morphisms in $\text{Alg}_r(G_1)$ seen as morphisms in
Set are always surjective. Next we define the transition and machine functors
as follows:

\vspace{1em}
\begin{minipage}[c][][c]{\textwidth}
  \hspace{-0.5cm}
  \begin{minipage}{0.40\textwidth}
    \[
    T : \text{Alg}_r(G_1)\to \mathcal{C}
    \] 
    \begin{itemize}
      \item Objects: $(\overline{x},\delta_1,\delta_2,\delta_3) \mapsto (\ker \delta_1^\sharp,\ker \delta^\sharp)$
      \item Morphisms:
        \[
          h:(\overline{x},X)\twoheadrightarrow (\overline{y},Y) \implies T(\overline{x},X) \subseteq T(\overline{y},Y)
        \]
    \end{itemize}
  \end{minipage}
  \begin{minipage}{0.50\textwidth}
    \[
    M : \mathcal{C} \to \text{Alg}_r(G_1)
    \] 
    \begin{itemize}
      \item Objects: $(C_1,C_2) \mapsto ([\epsilon]_{C_1},\sigma_{1,C},\sigma_{2,C},\sigma_{3,C})$
        \begin{align*}
          \sigma_{1,C}([w]_{C_1})(a)&=[wa]_{C_1}\\
          \sigma_{2,C}([w]_{C_1})(a)&=[(w,a)]_{C_2}\\
          \sigma_{3,C}([(u,v)]_{C_2})(a)&=[(u,va)]_{C_2}
        \end{align*}
      \item Morphisms:
        \[
          h: (C_1,C_2) \subseteq (D_1,D_2) \mapsto Mh_i([x]_{C_i})=[x]_{D_i}
        \]
    \end{itemize}
  \end{minipage}
  \vspace{1em}
\end{minipage}

It is clear that the functor $M$ is well defined; the transition maps are
well defined as $(C_1,C_2)$ is by definition a bisimulation equivalance, and
the morphisms are well-defined by properties of congruences. It remains
to show that $T$ is well-defined which we show in the next lemma.

\begin{lem}
  The functor $T$ is well-defined.
\end{lem}

\begin{proof}
  It is clear that both $\ker \delta_1^\sharp$ and $\ker \delta^\sharp$ are
  equivalence relations.
  As $(\overline{x},\delta_1)$ is just a DFA, we already know that
  $\ker \delta_1^\sharp$ is a congruence over $\Sigma^\ast$. Let
  $((u,v),(u',v'))\in \ker \delta^\sharp$, so for each $x\in X_1$ we have
  that $\delta(x,(u,v))=\delta(x,(u',v'))$. Next let $(w,w')\in\ker \delta_1^\sharp$.
  For each $x\in X_1$ we have that
  \[
    \delta(x,(wu,v))=\delta(\delta_1(x,w),(u,v))=\delta(\delta_1(x,w'),(u',v'))=\delta(x,(w'u',v')).
  \] 
  Hence $(\ker \delta_1^\sharp,\ker \delta^\sharp)$ is a congruence over
  $(\Sigma^\ast,\Sigma^{\ast +})$. We also note that it is equal to
  $\mathsf{Eq}\left( X \right) $ (see next lemma) and in particular a bisimulation equivalence,
  which is easy to show (it follows directly from determinedness, if
  $\delta(x,(u,v))=\delta(x,(u',v'))$ then $\delta(x,(u,va))=\delta(x,(u',v'a))$).

  Next, we claim that if $h:(\overline{x},X)\twoheadrightarrow (\overline{y},Y)$
  then $(\ker \delta_{X,1}^\sharp,\ker \delta_{X}^\sharp) \subseteq (\ker \delta_{Y,1}^\sharp,\ker \delta_Y^\sharp)$.
  Again, $h_1$ is just a morphism between pointed automata, so we know that
  $\ker \delta_{X,1}^\sharp \subseteq \ker \delta_{Y,1}^\sharp$. For the rest
  of the claim let $((u,v),(u',v'))\in \ker \delta_X^\sharp$ and $y\in Y_1$.
  As $h$ is surjective, there exists some $x\in X_1$ such that
  $y=h_1(x)$. We now have that
  \[
    \delta_Y(y,(u,v))=h(\delta_X(x,(u,v)))=h(\delta_X(x,(u',v')))=\delta_Y(y,(u',v')).
  \] Hence $((u,v),(u',v'))\in\ker \delta_Y^\sharp$.
\end{proof}

\begin{lem}\label{lem:Eq}
  Let $(X,\overline{x})$ be a reachable pointed lasso automaton. Then
  $T(X,\overline{x})=\mathsf{Eq}(X)$.
\end{lem}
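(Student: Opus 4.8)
The statement identifies two descriptions of the same congruence on $(\Sigma^\ast,\Sigma^{\ast+})$: the kernel-based one $T(X,\overline{x}) = (\ker\delta_1^\sharp,\ker\delta^\sharp)$ produced by the transition functor, and the satisfaction-based largest set of equations $\mathsf{Eq}(X)$. The plan is to show the two are equal by unwinding both definitions, and since each is a relation on the pair $(\Sigma^\ast,\Sigma^{\ast+})$, it suffices to check componentwise that a word-pair (resp.\ lasso-pair) lies in one iff it lies in the other.

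First I would recall that $\mathsf{Eq}(X)$ is by definition the largest bisimulation equivalence $E = (E_1,E_2)\subseteq(\Sigma^\ast,\Sigma^{\ast+})$ such that $X\models E$, i.e.\ such that for every $x\in X_1$ we have $\delta_1(x,u)=\delta_1(x,v)$ for all $(u,v)\in E_1$ and $\delta(x,(u,v))=\delta(x,(u',v'))$ for all $((u,v),(u',v'))\in E_2$. On the other hand, by the definition of $\delta_1^\sharp$ and $\delta^\sharp$ we have $(u,v)\in\ker\delta_1^\sharp \iff \delta_1^\sharp(u)=\delta_1^\sharp(v) \iff \forall x\in X_1:\ \delta_1(x,u)=\delta_1(x,v)$, and likewise $((u,v),(u',v'))\in\ker\delta^\sharp \iff \forall x\in X_1:\ \delta(x,(u,v))=\delta(x,(u',v'))$. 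Comparing these, the relation $(\ker\delta_1^\sharp,\ker\delta^\sharp)$ satisfies exactly the defining condition ``$X\models{-}$'', so $X\models(\ker\delta_1^\sharp,\ker\delta^\sharp)$; this gives one inclusion $(\ker\delta_1^\sharp,\ker\delta^\sharp)\subseteq\mathsf{Eq}(X)$ as soon as we know the left-hand pair is a legitimate set of equations, i.e.\ a bisimulation equivalence. That it is an equivalence is immediate (kernels of maps), and that it is a bisimulation equivalence is precisely the remark already recorded in the previous lemma's proof (it follows from determinism: $\delta(x,(u,v))=\delta(x,(u',v'))$ forces $\delta(x,(u,va))=\delta(x,(u',v'a))$, and similarly for the first component). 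For the reverse inclusion $\mathsf{Eq}(X)\subseteq(\ker\delta_1^\sharp,\ker\delta^\sharp)$: if $(u,v)\in\mathsf{Eq}(X)_1$ then by $X\models\mathsf{Eq}(X)$ we get $\delta_1(x,u)=\delta_1(x,v)$ for all $x\in X_1$, i.e.\ $(u,v)\in\ker\delta_1^\sharp$, and the $E_2$ case is identical using $\delta$. Hence the two relations coincide.

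The only genuinely substantive point — and the one I would flag as the main obstacle, though it is mild — is making sure that ``largest set of equations satisfied by $X$'' is well-defined and behaves as a genuine maximum, so that the two inclusions above actually pin it down: one needs that the collection of bisimulation equivalences $E$ with $X\models E$ is closed under the relevant joins (or at least that it has a greatest element), which is where reachability of $X$ and the thinness established in Lemma~\ref{lem:thin2} can be invoked, mirroring the DFA case of \cite{ballester:2015:dualEquivalenceOfEquationsAndCoequations}. Once that is in hand, the computation is purely a matter of reading off the definitions of $\delta_1^\sharp$, $\delta^\sharp$ and $\models$, so I would keep the writeup short and defer the routine verifications, exactly as the surrounding lemmas do.
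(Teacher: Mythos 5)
Your proof is correct and takes essentially the same route as the paper's: the inclusion $T(X,\overline{x})\subseteq\mathsf{Eq}(X)$ because $(\ker\delta_1^\sharp,\ker\delta^\sharp)$ is a bisimulation equivalence satisfied by $X$ (as recorded in the well-definedness lemma for $T$) and $\mathsf{Eq}(X)$ is the largest such, and the reverse inclusion by unwinding $X\models\mathsf{Eq}(X)$ against the definitions of the kernels. Your closing worry is harmless: the paper simply takes the existence of $\mathsf{Eq}(X)$ for granted from its earlier definition, and neither reachability nor the thinness of Lemma~\ref{lem:thin2} is actually needed for that point.
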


\begin{proof}
  As $\mathsf{Eq}\left(X \right)$ is the largest set of equations satisfied
  by $X$, we immediately have that $T(X,\overline{x})\subseteq \mathsf{Eq}(X)$.
  For the other direction, let $(u,v)\in \mathsf{Eq}(X)_1$, so for all
  $x\in X_1: \delta_1(x,u)=\delta_1(x,v)$, then by definition
  $(u,v)\in \ker \delta_1^\sharp = T(X,\overline{x})_1$. Finally, let
  $((u,v),(u',v'))\in \mathsf{Eq}(X)_2$, so for all $x\in X_1: \delta(x,(u,v))=\delta(x,(u',v'))$.
  Hence $((u,v),\left( u',v' \right))\in\ker \delta^\sharp = T(X,\overline{x})_2$.
\end{proof}

The last claim is that $T$ is a right adjoint of $M$. We again show that
the unit of this Galois connection is just the identity.

\begin{lem}
  For any bisimulation congruence $(C_1,C_2)$, we have that
  $(C_1,C_2) = MT(C_1,C_2)$.
\end{lem}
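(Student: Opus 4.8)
The plan is to show directly that $MT(C_1,C_2) = (C_1,C_2)$ by computing what the transition monoid of the machine automaton $M(C_1,C_2)$ is, and checking it recovers the original congruence. The key observation is that $M(C_1,C_2)$ has state space $(\Sigma^\ast/C_1, \Sigma^{\ast+}/C_2)$ with initial state $[\epsilon]_{C_1}$, and the transitions $\sigma_{1,C}, \sigma_{2,C}, \sigma_{3,C}$ are exactly the quotients of $\sigma_1, \sigma_2, \sigma_3$ on the initial algebra. So computing $TM(C_1,C_2) = (\ker \sigma_{1,C}^\sharp, \ker \sigma_C^\sharp)$ amounts to asking when two words (resp. two lassos) act identically on all states of $M(C_1,C_2)$.

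First I would handle the first component: $(u,v) \in \ker \sigma_{1,C}^\sharp$ iff for all $[w]_{C_1}$, $\sigma_{1,C}([w]_{C_1})(u) = \sigma_{1,C}([w]_{C_1})(v)$, which by definition of $\sigma_{1,C}$ means $[wu]_{C_1} = [wv]_{C_1}$ for all $w$. Taking $w = \epsilon$ gives $[u]_{C_1} = [v]_{C_1}$, i.e. $(u,v) \in C_1$; conversely, since $C_1$ is a congruence on $\Sigma^\ast$ (as part of being a congruence in the sense defined above), $(u,v) \in C_1$ implies $(wu, wv) \in C_1$ for all $w$. Hence $\ker \sigma_{1,C}^\sharp = C_1$. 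This is just the DFA argument already seen in the unit computation for $T \vdash M$ in Section~3.

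Next I would handle the second component: $((u,v),(u',v')) \in \ker \sigma_C^\sharp$ iff for all states $[w]_{C_1}$ in the first state space, $\sigma_C([w]_{C_1})((u,v)) = \sigma_C([w]_{C_1})((u',v'))$, where $\sigma_C([w]_{C_1})((u,v)) = \sigma_{\circ,C}(\sigma_{1,C}([w]_{C_1})(u))(v) = [(wu,v)]_{C_2}$. So the condition is $[(wu,v)]_{C_2} = [(wu',v')]_{C_2}$ for all $w \in \Sigma^\ast$. Taking $w = \epsilon$ gives $((u,v),(u',v')) \in C_2$. For the converse I would use that $(C_1,C_2)$ is a congruence with respect to the operation $\times$, so $((u,v),(u',v')) \in C_2$ implies $((wu,v),(wu',v')) \in C_2$ for all $w$. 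Hence $\ker \sigma_C^\sharp = C_2$, and therefore $TM(C_1,C_2) = (C_1,C_2)$; since morphisms in both thin categories are uniquely determined, this identity of objects gives $MT(C_1,C_2) = (C_1,C_2)$ as desired (matching the phrasing of the lemma, where the roles of $M$ and $T$ in the composite are with $M$ applied to the congruence first — so strictly one writes $\eta_{(C_1,C_2)} = \mathrm{id}$, the unit of $T \vdash M$).

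The main obstacle, such as it is, is being careful about which "congruence" notion is in play: as the text flags, the congruence structure here includes the extended multiplications $\cdot$ and $\times$ on words/lassos, which is precisely what licenses the converse directions above (closure under left-multiplication by $w$). If one only had the bare $G_1$-algebra congruence one would not get these closure properties for free, so the proof genuinely uses the enriched definition from the start of this subsection. Everything else is routine unravelling of the definitions of $\sigma_{1,C}$, $\sigma_{2,C}$, $\sigma_{3,C}$ and of $\delta^\sharp$ for the machine automaton.
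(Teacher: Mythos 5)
Your proposal is correct and follows essentially the same route as the paper: compute the kernel congruences of the machine automaton $M(C_1,C_2)$, get one inclusion by instantiating $w=\epsilon$, and get the converse from the congruence property (closure under left multiplication via $\cdot$ and $\times$), exactly as in the paper's proof. Your remark that the enriched congruence structure is what licenses the converse direction is a fair clarification of the paper's terser justification, but it is the same argument.
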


\begin{proof}
  First we look at the first sort. Let $u,v\in \Sigma^\ast$. Then
  \begin{align*}
    (u,v)\in MTC_1 &\iff \forall w\in \Sigma^\ast: \sigma_1([w],u)=\sigma_1([w],v) \\
    &\iff \forall w\in \Sigma^\ast: [wu]=[wv] \\
    &\iff [u]=[v] && \text{$w=\epsilon$ and as $C_1$ is a congruence} \\
    &\iff (u,v)\in C_1.
  \end{align*}
  Next, let $(u,v),(u',v')\in \Sigma^{\ast +}$. Then
  \begin{align*}
    ((u,v),(u',v'))\in MTC_2 &\iff \forall w\in \Sigma^\ast: \sigma([w],(u,v))=\sigma([w],(u',v')) \\
    &\iff \forall w\in \Sigma^\ast: [(wu,v)]=[(wu',v')] \\
    &\iff [(u,v)]=[(u',v')] && \text{$w=\epsilon$ and as $C_1$ is a congruence} \\
    &\iff ((u,v),(u',v'))\in C_2.
  \end{align*}
\end{proof}

The next lemma gives the counit of the Galois connection.

\begin{lem}
  Let $(X,\overline{x})$ be a pointed lasso automaton. Then
  $\epsilon_{(X,\overline{x})}:(MTX,[\epsilon])\to (X,\overline{x})$ given
  by $\epsilon_{1}([u])=\delta_1(x,u)$ and $\epsilon_2([(u,v)])=\delta(x,(u,v))$ 
  is a $G_1$-Algebra morphism.
\end{lem}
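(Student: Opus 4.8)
The plan is to check the three conditions defining a $G_1$-algebra morphism of pointed lasso automata: that $\epsilon_{(X,\overline{x})}=(\epsilon_1,\epsilon_2)$ is well-defined on equivalence classes, that $\epsilon_1$ preserves the initial state, and that $(\epsilon_1,\epsilon_2)$ is compatible with each of the three transition maps $\sigma_{1,C},\sigma_{2,C},\sigma_{3,C}$ that $M$ puts on $TX$. Write $C=(C_1,C_2)=T(X,\overline{x})$; by the definition of $T$ (equivalently, by Lemma~\ref{lem:Eq}) we have $C_1=\ker\delta_1^\sharp$ and $C_2=\ker\delta^\sharp$, and $MTX$ has carrier $(\Sigma^\ast/C_1,\Sigma^{\ast +}/C_2)$ with the transitions $\sigma_{i,C}$ as in the definition of $M$.

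\emph{Well-definedness and the initial state.} If $[u]_{C_1}=[v]_{C_1}$ then $(u,v)\in\ker\delta_1^\sharp$, i.e.\ $\delta_1^\sharp(u)=\delta_1^\sharp(v)$ in $X_1^{X_1}$; evaluating at the fixed state $\overline{x}$ gives $\delta_1(\overline{x},u)=\delta_1(\overline{x},v)$, so $\epsilon_1$ does not depend on the chosen representative. Similarly, if $[(u,v)]_{C_2}=[(u',v')]_{C_2}$ then $((u,v),(u',v'))\in\ker\delta^\sharp$, so in particular $\delta(\overline{x},(u,v))=\delta(\overline{x},(u',v'))$ and $\epsilon_2$ is well-defined. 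Preservation of the initial state is immediate: $\epsilon_1([\epsilon]_{C_1})=\delta_1(\overline{x},\epsilon)=\overline{x}$.

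\emph{Compatibility with the transitions.} The first map behaves exactly as in the DFA counit computation: $\epsilon_1(\sigma_{1,C}([w])(a))=\epsilon_1([wa])=\delta_1(\overline{x},wa)=\delta_1(\delta_1(\overline{x},w),a)=\delta_1(\epsilon_1([w]),a)$. For the second, unfold $\delta(\overline{x},(w,a))=\delta_\circ(\delta_1(\overline{x},w),a)$ and use that $\delta_\circ$ restricted to $X_1$ is $\delta_2$: since $\delta_1(\overline{x},w)\in X_1$ we get $\epsilon_2(\sigma_{2,C}([w])(a))=\epsilon_2([(w,a)])=\delta_\circ(\delta_1(\overline{x},w),a)=\delta_2(\delta_1(\overline{x},w),a)=\delta_2(\epsilon_1([w]),a)$. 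For the third, observe that for a nonempty word $v$ one has $\delta_\circ(y,va)=\delta_3(\delta_\circ(y,v),a)$, because after consuming the nonempty prefix $v$ the state $\delta_\circ(y,v)$ lies in $X_2$, where $\delta_\circ$ acts as $\delta_3$; applying this with $y=\delta_1(\overline{x},u)$ yields $\epsilon_2(\sigma_{3,C}([(u,v)])(a))=\epsilon_2([(u,va)])=\delta(\overline{x},(u,va))=\delta_\circ(\delta_1(\overline{x},u),va)=\delta_3(\delta(\overline{x},(u,v)),a)=\delta_3(\epsilon_2([(u,v)]),a)$.

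The proof is essentially an exercise in unravelling definitions and closely parallels the counit computation for DFAs. The only step requiring genuine care --- and the only place I anticipate any friction --- is the bookkeeping for $\delta$ and $\delta_\circ$ on words of length at least one: one must keep straight that the first letter of the period takes a state of $X_1$ into $X_2$ via $\delta_2$ while every later letter acts within $X_2$ via $\delta_3$. Those two facts are precisely what make the $\sigma_{2,C}$ and $\sigma_{3,C}$ squares commute; the rest is routine.
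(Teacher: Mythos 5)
Your proposal is correct and follows essentially the same route as the paper's proof: check well-definedness via the kernels defining $T(X,\overline{x})$, note preservation of the initial state, and verify the transition squares, where the paper only writes out the $\sigma_{1,C}$ square and declares the other two analogous. Your explicit bookkeeping of $\delta_\circ$ versus $\delta_2$ and $\delta_3$ (first letter of the period via $\delta_2$, later letters via $\delta_3$, using that the period is nonempty) is exactly the content hiding behind the paper's ``analogous''.
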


\begin{proof}
  This is well defined by how we obtained the equivalence classes. To show that
  it is a $G_1$-Algebra morphism we have that it preserves initial states by
  the definition of $\epsilon_1$. Moreover, it also respects transitions as
  for all $a\in \Sigma$ :
  \[
    \epsilon_1(\sigma_1([w],a))=\epsilon_1([wa])=\delta_1(\overline{x},wa)=\delta_1(\epsilon_1([w]),a).
  \] The proofs for the other two transitions is analogous.
\end{proof}

With this we have established a Galois connection.

\begin{cor}
  $T$ is a right adjoint of $M$.
\end{cor}

Note, that the functor $T$ can also be defined to include accepting states
(as was pointed out for deterministic automata) and is language preserving.
This follows directly by unravelling the definitions so we omit the details.

We can now define the functors
$\nuC$ and $\muPL$, making use of the reachability-inclusion adjunction, the
transition-machine adjunction we have established, and the lifted
contravariant powerset adjunction \cite{cruchten:2022:omegaAutomata}.

\begin{center}
\begin{tikzpicture}[modal]
\node[] at (0,0) (0) [label=above:{}] {$\text{Alg}(G_1)$};
\node[] at (3,0) (1) [label=above:{}] {$\text{Alg}_r(G_1)$};
\node[] at (6,0) (2) [label=above:{}] {$\mathcal{C}$\phantom{CC}};
\node[rotate=90] at (1.5,0) {$\dashv$};
\node[rotate=90] at (4.5,0) {$\dashv$};
\node at (3,2.4) {$\nuC: \text{Alg}(G_1)\to \text{Alg}(G_1)$};
\node at (3,1.8) {$\nuC= I\circ M\circ T\circ R$};
\node[] at (8,0) (5) [label=above:{}] {$\text{CoAlg}(F_2)$};
\node[] at (11,0) (6) [label=above:{}] {$\text{Alg}(G_1)^{\text{op}}$};
\node[] at (14,0) (7) [label=above:{}] {$\mathcal{C}^{\text{op}}$};
\node[rotate=90] at (9.5,0) {$\vdash$};
\node[rotate=90] at (12.5,0) {$\vdash$};
\node at (11,2.4) {$\muPL: \text{CoAlg}(F_2)\to \text{CoAlg}(F_2)$};
\node at (11,1.8) {$\muPL= \overline{P}^{\text{op}}\circ \nuC^{\text{op}}\circ \overline{P}$};
\path[->] (0) edge[bend left] node[]{$R$} (1);
\path[->] (1) edge[bend left] node[]{$T$} (2);
\path[->] (1) edge[bend left] node[]{$I$} (0);
\path[->] (2) edge[bend left] node[]{$M$} (1);
\path[->] (5) edge[bend left] node[]{$\widehat{P}$} (6);
\path[->] (6) edge[bend left] node[]{$(T\circ R)^{\text{op}}$} (7);
\path[->] (6) edge[bend left] node[]{$\widehat{P}^{\text{op}}$} (5);
\path[->] (7) edge[bend left] node[]{$(I\circ M)^{\text{op}}$} (6);
\end{tikzpicture}
\end{center}

As before, $\nuC$ is a (idempotent) comonad and $\muPL$ is a monad. Moreover,
we can extend the definitions of $\nuC$ and $\muPL$ to make them endofunctors
on the category $\text{LA}$ of lasso automata. For each initial state, there
is a canonical choice of initial state for $\muPL(X,c)$ and for each selection
of final states, there is a canonical choice of final states for
$\nuC(X,\overline{x})$.

\begin{center}
\begin{tikzpicture}[modal]

\node[] at (1,0) (0) [label=above:{}] {$F(\Sigma^\ast, \Sigma^{\ast +})$};
\node[] at (7.5,0) (3) [label=above:{}] {$F(X_1,X_2)$};
\node[] at (14,0) (6) [label=above:{}] {$F(2^{\Sigma^{\ast +}}, 2^{\Sigma^\ast})$};
\node[] at (1,1.5) (7) [label=above:{}] {$(\Sigma^\ast, \Sigma^{\ast +})$};
\node[] at (7.5,1.5) (a) [label=above:{}] {$(X_1,X_2)$};
\node[] at (14,1.5) (d) [label=above:{}] {$(2^{\Sigma^{\ast +}}, 2^{\Sigma^\ast})$};
\node[] at (1,3) (e) [label=above:{}] {$(1,0)$};
\node[] at (14,3) (f) [label=above:{}] {$(1,2)$};

\node[] at (4,0) (2) [label=above:{}] {$F\nuC(\overline{x},\{\delta_i\})$};
\node[] at (11,0) (4) [label=above:{}] {$F\muPL(\{\delta_i\},c)$};
\node[] at (4,1.5) (9) [label=above:{}] {$\nuC(\overline{x},\{\delta_i\})$};
\node[] at (11,1.5) (b) [label=above:{}] {$\muPL(\{\delta_i\},c)$};
\path[->] (0) edge node[]{} (2);
\path[->] (2) edge node[]{} (3);
\path[->] (3) edge node[]{} (4);
\path[->] (4) edge node[]{} (6);
\path[->] (7) edge node[]{} (0);
\path[->] (7) edge node[]{} (9);
\path[->] (9) edge node[]{} (2);
\path[->] (9) edge node[below]{$\epsilon_{(\overline{x},\{\delta_i\})}$} (a);
\path[->] (a) edge node[]{$\{\delta_i\}$} (3);
\path[->] (a) edge node[below]{$\eta_{(\{\delta_i\},\chi)}$} (b);
\path[->] (a) edge[bend left,looseness=0.5] node[]{$\chi$} (f);
\path[->] (b) edge node[]{} (4);
\path[->] (b) edge node[]{} (d);
\path[->] (b) edge[bend left,looseness=0.5] node[]{} (f);
\path[->] (d) edge node[]{} (6);
\path[->] (d) edge node[]{} (f);
\path[->] (e) edge node[]{} (7);
\path[->] (e) edge[bend left,looseness=0.5] node[]{} (9);
\path[->] (e) edge[bend left,looseness=0.5] node[]{$\overline{x}$} (a);
\end{tikzpicture}

\begin{align*}
  \epsilon_{(\overline{x},\delta)}: \nuC (\overline{x},\delta) &\longrightarrow (\overline{x},\delta) & \eta_{(\delta,c)}: (\delta,c) &\longrightarrow \muPL (\delta,c) \\
  [u] &\longmapsto \epsilon_{1}([u]) = \delta_1(\overline{x})(u) & x &\longmapsto \eta_{1}(x) = \{[(u,av)]\mid \delta(x)(v^r,au^r)\in c\} \\
  [(u,v)] &\longmapsto \epsilon_{2}([(u,v)]) = \delta(\overline{x})(u,v) & y &\longmapsto \eta_{2}(y) = \{[u]\mid \delta_3(y)(u^r)\in c\} 
\end{align*}
\end{center}

We start by showing that $\mathsf{Eq}(\left<L \right>)$ is the syntactic congruence
of the lasso language $L$.

\begin{prop}
  Let $L$ be a lasso language. Then $\nuC(\left<L \right>)\cong \left<L \right>$.
\end{prop}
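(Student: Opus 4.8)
The plan is to exhibit $\nuC(\langle L\rangle)$ as the reachable part of the machine on the syntactic congruence and then identify that machine with $\langle L\rangle$ itself. Concretely, write $\langle L\rangle = (X_1,X_2,\overline{x},\delta_1,\delta_2,\delta_3,c)$ for the subautomaton of the terminal $F_2$-coalgebra generated by $L$ (so $\overline{x}=L$ and $c$ is the canonical accepting set), which is reachable by construction. By Lemma~\ref{lem:Eq} we have $T(\langle L\rangle,\overline{x})=\mathsf{Eq}(\langle L\rangle)=(\ker\delta_1^\sharp,\ker\delta^\sharp)$, and since $\nuC = I\circ M\circ T\circ R$ and $\langle L\rangle$ is already reachable, $\nuC(\langle L\rangle) = M(\mathsf{Eq}(\langle L\rangle))$, the machine automaton $\bigl((\Sigma^\ast,\Sigma^{\ast+})/\mathsf{Eq}(\langle L\rangle),[\epsilon],\sigma_{1,C},\sigma_{2,C},\sigma_{3,C},\ldots\bigr)$, equipped with the accepting states obtained by pushing $c$ through the counit as described in the diagram.

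Next I would show the counit $\epsilon_{(\langle L\rangle,\overline{x})}: M T\langle L\rangle \to \langle L\rangle$ is an isomorphism. It is always a $G_1$-algebra morphism (by the counit lemma), and it is automatically surjective because $\langle L\rangle$ is reachable. For injectivity, suppose $\epsilon_1([u])=\epsilon_1([v])$, i.e.\ $\delta_1(\overline{x},u)=\delta_1(\overline{x},v)$; I must upgrade "the initial state maps $u$ and $v$ to the same place" to "\emph{every} state maps $u$ and $v$ to the same place", so that $(u,v)\in\ker\delta_1^\sharp$ and hence $[u]=[v]$ — and similarly on the second sort for lassos. This is exactly the content of $\langle L\rangle$ being minimal/simple: in the subautomaton of the terminal coalgebra generated by a single language, any two distinct states have different accepted languages, and one checks $\delta_1(x,u)=\delta_1(x,v)$ for all $x$ iff $\delta_1(\overline{x},u)=\delta_1(\overline{x},v)$, because the reachable states $x=\delta_1(\overline{x},w)$ and the left-derivative structure of the terminal coalgebra let one slide the witness word $w$ past $u$ and $v$. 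Equivalently: in the terminal $F_2$-coalgebra, $\delta_1(x,u)$ is the appropriate derivative of (the language coded by) $x$, so equality for the generator forces equality for all reachable $x$ since the transitions commute with the coalgebra structure. Once injectivity holds on both sorts, $\epsilon$ is a bijective $G_1$-morphism between reachable algebras, hence an iso (the inverse is automatically a morphism, or one invokes thinness from Lemma~\ref{lem:thin2}).

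Finally I would note that the accepting states match up: the accepting set of $\nuC(\langle L\rangle)$ was defined as $\{[w] \mid \delta_3(\overline{x},w^r)\in c\}$ (reading off the $\eta$/$\epsilon$ formulas for lasso automata in the displayed diagram), which is transported by $\epsilon$ precisely to $c$ on $\langle L\rangle$; so $\epsilon$ is an isomorphism of pointed-and-accepting lasso automata, giving $\nuC(\langle L\rangle)\cong\langle L\rangle$. The main obstacle is the injectivity argument — upgrading agreement-at-the-initial-state to agreement-at-all-states on both the word sort and the lasso sort; this is where one genuinely uses that $\langle L\rangle$ sits inside the terminal coalgebra (so states \emph{are} languages and transitions \emph{are} derivatives) together with reachability, rather than anything special about lassos. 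Everything else is bookkeeping already licensed by Lemma~\ref{lem:Eq}, the counit lemma, and the thinness lemma.
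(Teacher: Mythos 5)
Your overall strategy is the same one the paper uses: surjectivity of the counit comes for free from reachability, thinness handles the bookkeeping, and the entire weight of the proposition rests on the injectivity step, i.e.\ on upgrading ``$u$ and $v$ act the same at the generator'' to ``$u$ and $v$ act the same at every reachable state''. That is precisely where your argument has a genuine gap: the justification you offer (``the left-derivative structure of the terminal coalgebra lets one slide the witness word $w$ past $u$ and $v$'', ``transitions commute with the coalgebra structure'') does not hold up. In $\langle L\rangle$ the reachable first-sort state $x=\delta_1(\overline{x},w)$ is the prefix derivative $w^{-1}L=\{(p,q):(wp,q)\in L\}$, and $\delta_1(x,u)=(wu)^{-1}L=u^{-1}\bigl(w^{-1}L\bigr)$: derivative composition puts $w$ \emph{before} $u$, so the hypothesis $u^{-1}L=v^{-1}L$ gives no information about $u^{-1}\bigl(w^{-1}L\bigr)$ versus $v^{-1}\bigl(w^{-1}L\bigr)$. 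There is no commutation that lets you move the witness $w$ past $u$; this is the familiar fact that the Nerode right congruence need not be stable under concatenation on the left, and nothing about sitting inside the terminal coalgebra changes it.

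The gap is not cosmetic, because the implication you need can actually fail. Take $\Sigma=\{a,b\}$ and $L=\{(u,v)\in\Sigma^{\ast +} : u \text{ ends with } a\}$. Then $\delta_1(L,\epsilon)=\delta_1(L,b)=L$, but at the reachable state $a^{-1}L$ one has $\delta_1(a^{-1}L,\epsilon)=a^{-1}L\neq (ab)^{-1}L=\delta_1(a^{-1}L,b)$, so $(\epsilon,b)\notin\ker\delta_1^\sharp$ and $[\epsilon]\neq[b]$ in $\nuC(\langle L\rangle)$, while the counit sends both classes to $L$: it is not injective. Indeed $\nuC(\langle L\rangle)$ has three first-sort states (the classes of $\epsilon$, of words ending in $a$, and of words ending in $b$) whereas $\langle L\rangle$ has only two ($L$ and $a^{-1}L$), so no isomorphism is possible for this $L$; your proof cannot be completed as written, and the example shows the statement needs extra hypotheses or a different reading. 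In fairness, the paper's own proof makes the same silent identification in the step ``$\xi_1(L,w)=\xi_1(L,w') \iff (w,w')\in\ker\xi_1^\sharp \iff [w]=[w']$'': the language of $[w]$ only records the action of $w$ at the generator $L$, whereas $[w]$ records its action at all reachable states. So you have correctly located the crux of the argument, but the ``sliding'' step you rely on does not close it.
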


\begin{proof}
  As there is a map $\nuC(\left<L \right>)\to \left<L \right>$ and $\left<L \right>$ 
  is minimal, it is sufficient to show that $\nuC(\left<L \right>)$ is minimal.
  Let $[w]\in \nuC(\left<L \right>)$ and $(u,v)\in \Sigma^{\ast +}$. Then
  \begin{align*}
    (u,v)\in L([w]) &\iff \epsilon\in \epsilon_2(\sigma([w],(u,v))) \\
                    &\iff \epsilon \in \xi(L,(wu,v)) \\
                    &\iff \epsilon\in \{w'\mid (wu,vw')\in L\} \\
                    &\iff (wu,v)\in L.
  \end{align*}
  Hence for $[w],[w']$ we have
  \begin{align*}
    L([w])=L([w']) &\iff \forall (u,v)\in\Sigma^{\ast +}: (wu,v)\in L \iff (w'u,v)\in L \\
                   &\iff \forall (u,v)\in\Sigma^{\ast +}: (u,v)\in \xi_1(L,w) \iff (u,v)\in \xi_1(L,w') \\
                   &\iff \xi_1(L,w) = \xi_1(L,w') \\
                   &\iff (w,w')\in\ker \xi_1^\sharp \\
                   &\iff [w]=[w'].
  \end{align*}
  Hence $\nuC(\left<L \right>)\cong \left<L \right>$.
\end{proof}

For a lasso language $L$, we can now state its Myhill-Nerode equivalence
($\sim_L$) and its syntactic congruence ($\equiv_L$). The Myhill-Nerode
equivalence ${\sim_L}=(\sim_L^1,\sim_L^2)$ is obtained through the reachability (the unique map from the
initial $G_1$-algebra to $\left<L \right>$) and observability map (the unique map
from $\left<L \right>$ to the final $F_2$-coalgebra).
\begin{align*}
  w \sim^1_L w' &\iff \forall (u,v)\in\Sigma^{\ast +}: (wu,v)\in L \iff (w'u,v)\in L,\\
  (u,v)\sim^2_L (u',v') &\iff \forall w\in\Sigma^\ast: (u,vw)\in L \iff (u',v'w)\in L.
\end{align*}
The syntactic congruence corresponds to $\mathsf{Eq}(\left<L \right>)$ and
is given by $\equiv_L=(\equiv_L^1,\equiv_L^2)$ defined as
\begin{align*}
  w \equiv^1_L w' &\iff \forall u\in\Sigma,\forall (v_1,v_2)\in \Sigma^{\ast +}: (uwv_1,v_2)\in L \iff (uw'v_1,v_2)\in L,\\
  (u,v)\equiv^2_L (u',v') &\iff \forall w,w'\in \Sigma: (wu,vw')\in L \iff (wu',v'w')\in L.
\end{align*}
The Myhill-Nerode equivalence for lasso languages can already be found in
\cite{ciancia:2019:omegaAutomata,cruchten:2022:omegaAutomata}.

The relationship between $\mathsf{free}(X)=(\Sigma^\ast,\Sigma^{\ast +})/{\mathsf{Eq}(X)}$ 
and $\nuC(X,\overline{x})$ when $\left<x \right>= X$ is still the same as
it was in the case for DFAs.

\begin{prop}
  Let $(X,\overline{x})$ be a reachable pointed lasso automaton. Then
  $\nuC(X,\overline{x})\cong \mathsf{free}(X)$.
\end{prop}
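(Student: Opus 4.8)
The plan is to unfold the definition $\nuC = I\circ M\circ T\circ R$ and read off the statement from results already in hand, the decisive one being Lemma~\ref{lem:Eq}. Here $R$ denotes the functor sending a pointed lasso automaton to its reachable part, $T$ and $M$ are the transition and machine functors of the Galois connection established above, and $I\colon\text{Alg}_r(G_1)\hookrightarrow\text{Alg}(G_1)$ is the inclusion, which is the identity on objects.

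First I would use the hypothesis that $(X,\overline{x})$ is reachable, i.e. $X=\left<\overline{x}\right>$: then its reachable part is all of it, so $R(X,\overline{x})=(X,\overline{x})$ in $\text{Alg}_r(G_1)$ (at worst up to the obvious isomorphism, depending on how one pins down $R$). Applying $T$ and invoking Lemma~\ref{lem:Eq}, which says precisely that $T(X,\overline{x})=\mathsf{Eq}(X)$ for a reachable pointed lasso automaton, yields $T(R(X,\overline{x}))=\mathsf{Eq}(X)$ in $\mathcal{C}$. By the definition of the machine functor, $M(\mathsf{Eq}(X))$ is the $G_1$-algebra carried by the quotient $(\Sigma^\ast,\Sigma^{\ast+})/\mathsf{Eq}(X)$, with initial state $[\epsilon]$ and the three transition maps induced from the initial algebra — which is exactly $\mathsf{free}(X)$ as recalled just before the statement. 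Since $I$ leaves the underlying $G_1$-algebra untouched, chaining these identities gives $\nuC(X,\overline{x})=I(M(T(R(X,\overline{x}))))=I(M(\mathsf{Eq}(X)))=\mathsf{free}(X)$ as $G_1$-algebras.

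I do not anticipate a genuine obstacle: the argument is purely definitional bookkeeping once Lemma~\ref{lem:Eq} is available. The point worth flagging is exactly where reachability enters — it is what makes Lemma~\ref{lem:Eq} applicable, so that $T$ records the full congruence $\mathsf{Eq}(X)$ rather than only the equations satisfied by some sub-part, and correspondingly $R$ sits harmlessly at the end of the composite; for a non-reachable $(X,\overline{x})$ one would instead get $\nuC(X,\overline{x})\cong\mathsf{free}(\left<\overline{x}\right>)$. It is also worth recalling that $\mathsf{free}(X)$ is typically a proper (finer) quotient of the initial algebra and is not isomorphic to $X$ itself, so this is genuinely the statement that $\nuC$ agrees with $\mathsf{free}$ on reachable automata; the counit $\epsilon_{(X,\overline{x})}\colon MTX\to X$ then realises the canonical surjection $\mathsf{free}(X)\twoheadrightarrow X$, not an isomorphism.
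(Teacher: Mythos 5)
Your proof is correct and is essentially the paper's own argument: the paper's proof is just the one-line remark that the claim ``follows simply from reachability and Lemma~\ref{lem:Eq}'', and your write-up merely spells out the same chain $\nuC = I\circ M\circ T\circ R$, with $R$ acting as the identity by reachability, Lemma~\ref{lem:Eq} giving $T(X,\overline{x})=\mathsf{Eq}(X)$, and $M(\mathsf{Eq}(X))$ being by definition $(\Sigma^\ast,\Sigma^{\ast+})/\mathsf{Eq}(X)=\mathsf{free}(X)$. Your closing remarks (where reachability is used, and that the counit is a surjection rather than an isomorphism) are accurate but not needed beyond that.
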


\begin{proof}
  This follows simply from reachability and Lemma \ref{lem:Eq}.
\end{proof}

\begin{prop}
  For any accepting lasso automaton $(X,\{\delta_i\} ,c)$,
  $\muPL(X,\{\delta_i\} ,c)$ is minimal.
\end{prop}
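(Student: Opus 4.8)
The plan is to follow the proof of the analogous statement for deterministic automata proved above (that $\muPL(\delta,c)$ is minimal): for each state of $\muPL(X,\{\delta_i\},c)$ we compute the language it accepts, and we show that this assignment is injective, so that no two distinct states are behaviourally equivalent, i.e. the unique morphism into the terminal $F_2$-coalgebra $(2^{\Sigma^{\ast +}},2^{\Sigma^\ast})$ is injective on both sorts. By definition $\muPL(X,\{\delta_i\},c)=\overline{P}^{\text{op}}(\nuC^{\text{op}}(\overline{P}(X,\{\delta_i\},c)))$, so its first sort consists of sets of equivalence classes of lassos and its second sort of sets of equivalence classes of finite words (the classes being those of the congruence produced by the inner $\nuC$); the transitions and final states are the reversed/lifted ones read off from the machine and powerset constructions, exactly as recorded in the table and the $\eta$-formulas above.

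First I would compute the two language maps. For a state $V$ in the second sort of $\muPL(X,\{\delta_i\},c)$ -- a set of word-classes $[w]$ -- reading a word $u$ repeatedly applies the reversed transition $\widehat{\sigma}$, so the state reached from $V$ on input $u$ is $\{[w]\mid [wu^r]\in V\}$, and it is accepting iff $[\epsilon]$ belongs to it; hence $L(V)=\{u\mid [u^r]\in V\}$, verbatim as in the deterministic case. For a state $U$ in the first sort -- a set of lasso-classes $[(p,q)]$ -- reading a lasso $(u,v)$ means running $u$ through the reversed $\sigma_1$-type transitions, switching to the second sort on the first letter of $v$ via the reversed $\sigma_2$-type transition, running the rest of $v$ through the reversed $\sigma_3$-type transitions, and finally testing the acceptance condition; unwinding this yields $L(U)=\{(u,v)\mid [\tau(u,v)]\in U\}$, where $\tau$ is the reversal-type bijection on lassos already visible in the formula for $\eta_1$ (schematically $(v^r,au^r)\mapsto(u,av)$). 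Since every equivalence class is of the form $[w]$ (resp. $[(p,q)]$), the maps $U\mapsto L(U)$ and $V\mapsto L(V)$ are injective.

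From injectivity it follows at once that $L(U)=L(U')$ forces $U=U'$ and $L(V)=L(V')$ forces $V=V'$; as bisimulation equivalence refines language equivalence, no two distinct states of $\muPL(X,\{\delta_i\},c)$ are bisimilar, so the automaton is minimal. The main obstacle is purely bookkeeping: carefully pushing the two-sorted, reversal-twisted structure of $\muPL$ through the composite $\overline{P}\circ\nuC\circ\overline{P}$ and pinning down $\tau$ -- the cycle component of a lasso is reversed while the first letter of the cycle migrates, as the $\eta_1$-formula shows, so one must check that $\tau$ is genuinely a bijection on $\Sigma^{\ast +}$. Alternatively, one can sidestep this: $\nuC$ applied to any object is, by construction, the machine $M$ applied to a congruence, hence a quotient $(\Sigma^\ast,\Sigma^{\ast +})/E$ of the initial $G_1$-algebra and therefore reachable; since the lifted contravariant powerset functor sends reachable algebras to minimal coalgebras (reachability and minimality being dual under the powerset self-duality), $\muPL(X,\{\delta_i\},c)=\overline{P}(\nuC(\overline{P}(X,\{\delta_i\},c)))$ is minimal. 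I would present the hands-on computation to keep the argument parallel to, and as self-contained as, the deterministic case.
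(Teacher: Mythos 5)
Your proposal takes essentially the same route as the paper: the paper's proof likewise computes, for a first-sort state $P$ (a set of lasso-classes), that $(u,av)\in L(P)\iff[\epsilon]\in\widehat{\sigma}(P)(u,av)\iff[(v^r,au^r)]\in P$, so the accepted language determines the state and $\muPL(X,\{\delta_i\},c)$ is minimal. Your explicit handling of the second sort (which is just the deterministic computation $L(V)=\{u\mid[u^r]\in V\}$) and the alternative ``reachable algebras dualize to minimal coalgebras'' remark are harmless extras the paper leaves implicit; the core argument is identical and correct.
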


\begin{proof}
  Let $P$ be a state in $\muPL(X)_1$, i.e. $P$ is a set of equivalence classes
  over $\Sigma^{\ast +}$. Then
  \[
    (u,av)\in L(P) \iff [\epsilon]\in\widehat{\sigma}(P)(u,av) \iff (v^r,au^r)\in P.
  \] 
  It follows that $P=Q\implies L(P)=L(Q)$, i.e. $\muPL(X)$ is minimal.
\end{proof}

\subsection{$\Omega$-Automata and Wilke Algebras}

For $\omega$-languages (and $\infty$-languages), the coalgebraic counterpart
to deterministic automata can be played by $\Omega$-automata, and the algebraic
counterpart to monoids can be played by Wilke algebras \cite{wilke:1993:algebraic}
(or $\omega$-semigroups).

In this section, we show that the transition Wilke algebra construction from
\cite{cruchten:2022:omegaAutomata} is functorial, and that it forms the right
adjoint of a Galois connection. The definition of a Wilke algebra and the
surrounding algebraic language theory can be found in \cite{wilke:1993:algebraic}.
The functors $F, G, F_2$ and $G_1$ in this section are the same as defined at
the start of Section \ref{section:LassoAutomata}.

Before we define the transition functor, we introduce some additional definitions.
We make use of the notion of an admissible set as
defined in \cite{cruchten:2022:omegaAutomata}. For a pointed lasso automaton
$(X_1,X_2,\overline{x},\delta_1,\delta_2,\delta_3)$, a set $c\subseteq X_2$ is
admissible if it turns the pointed lasso automaton into a pointed and
accepting $\Omega$-automaton. The set of all admissible subsets is written
$\text{Adm}(X_2)$. We define the following equivalence relation on
ultimately periodic words:
\[
  uv^\omega \sim u'v'^\omega \iff \forall x\in X_1, \forall c\in \text{Adm}(X_2): \delta(x,(u,v))\in c \iff \delta(x,(u',v'))\in c.
\] 
This is well defined as the sets are admissible, so by definition if
$(u,v)\sim_\gamma (u',v')$ then for all $x\in X_1$ we have that
$\delta(x,(u,v))\in c \iff \delta(x,(u',v'))\in c$.

\begin{rmk}
  Our definition of $\sim$ is very closely related to $\ker \delta^\sharp$ which
  we have used for lasso automata. Note that for an $\Omega$-automaton,
  we only consider certain sets as admissible as an $\Omega$-automaton
  should not be able to distinguish
  between $\gamma$-equivalent lassos. However, for lasso automata, any
  subset is admissible in this sense. If we change the notion of admissibility
  in the definition of $\sim$ to include all subsets, we recapture precisely
  the congruence $\ker\delta^\sharp$. In that sense, the transition Wilke algebra
  functor we define below is very strongly related to the transition construction
  for lasso automata.
\end{rmk}

We define a transition functor from the category of reachable $G_1$-algebras
to the category of Wilke algebra congruences $\mathcal{C}$. As before, both
categories are thin categories.

\begin{lem}
  The categories $\text{Alg}_r(G_1)$ and $\mathcal{C}$ are thin.
\end{lem}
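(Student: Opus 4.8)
The plan is to mirror the two thinness proofs that already appear in the paper (Lemma \ref{lem:thin} and Lemma \ref{lem:thin2}), since the situation here is essentially the same: reachability on the coalgebraic side forces any morphism to be determined on generators, and on the algebraic side every object is a quotient of a fixed free object, so a morphism is again determined on generators. Concretely, for $\mathcal{C}$ — the category of Wilke algebra congruences on the free Wilke algebra over $\Sigma$ (equivalently, on the pair $(\Sigma^+, \Sigma^\omega_{\mathrm{up}})$ or its lasso presentation), with inclusions as morphisms — thinness is immediate: between any two objects there is at most one inclusion. So the only real content is $\text{Alg}_r(G_1)$.

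For $\text{Alg}_r(G_1)$, I would take two parallel $G_1$-algebra morphisms $f, g : (X_1, X_2, \overline{x}, \delta_1, \delta_2, \delta_3) \to (Y_1, Y_2, \overline{y}, \delta_{1,Y}, \delta_{2,Y}, \delta_{3,Y})$ and show $f_1 = g_1$ and $f_2 = g_2$ pointwise. Given $x \in X_1$, reachability supplies $w \in \Sigma^\ast$ with $x = \delta_1(\overline{x})(w)$; since $f_1$ preserves the initial state and commutes with $\sigma_1$-type transitions, $f_1(x) = \delta_{1,Y}(\overline{y})(w)$, and the identical computation gives $g_1(x) = \delta_{1,Y}(\overline{y})(w)$, hence $f_1(x) = g_1(x)$. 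For $x' \in X_2$, reachability (in the lasso/$\Omega$ sense) gives a lasso $(u,v) \in \Sigma^{\ast +}$ with $x' = \delta(\overline{x})(u,v) = \delta_\circ(\delta_1(\overline{x}, u), v)$; pushing this through $f$ using that $f$ respects $\delta_1, \delta_2, \delta_3$ yields $f_2(x') = \delta_Y(\overline{y})(u,v)$, and likewise for $g$, so $f_2(x') = g_2(x')$. This is word-for-word the argument in the proof of Lemma \ref{lem:thin2}, and I would simply cite that lemma rather than reproduce it, noting that the reachability notion for $\Omega$-automata is the same as for lasso automata.

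The one point that deserves a sentence of justification is \emph{why} thinness of $\mathcal{C}$ is clear in this Wilke-algebra setting, i.e. what the objects and morphisms of $\mathcal{C}$ actually are: they are Wilke algebra congruences on the free Wilke algebra (the carrier being the lasso/ultimately-periodic data over $\Sigma$, quotiented by $\sim_\gamma$-type structure), ordered by inclusion. Since a category whose morphisms are inclusions of subobjects of a fixed object is automatically thin, nothing further is needed there. I would phrase the proof as: "The claim for $\mathcal{C}$ is immediate as its morphisms are inclusions. For $\text{Alg}_r(G_1)$ the argument is identical to that of Lemma \ref{lem:thin2}: by reachability any $G_1$-morphism out of a reachable pointed lasso automaton is determined by its value on the initial state, hence unique."

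I do not anticipate a genuine obstacle; the only thing to be careful about is bookkeeping across the two sorts $X_1, X_2$ and making sure the reachability hypothesis is invoked for $X_2$ via the composite map $\delta = \delta_\circ \circ \delta_1$ rather than attempting to reach $X_2$-states directly from $\overline{x}$ by a single transition. If anything is subtle it is purely notational — confirming that "reachable $G_1$-algebra" here means reachable in exactly the sense used in Section \ref{section:LassoAutomata} — and this is already implicitly fixed by the paper's convention that the $\Omega$-automata functors $F, G, F_2, G_1$ coincide with the lasso ones.
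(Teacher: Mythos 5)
Your proposal is correct and matches the paper's own treatment, which simply notes that the proof is analogous to Lemmas \ref{lem:thin} and \ref{lem:thin2}: since the functor $G_1$ here coincides with the lasso-automata one, $\text{Alg}_r(G_1)$ is literally the category already shown thin by the reachability argument, and thinness of $\mathcal{C}$ is immediate because its morphisms are inclusions of congruences.
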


\begin{proof}
  The proof is analogous to that of Lemma \ref{lem:thin} or \ref{lem:thin2}.
\end{proof}

The transition functor is closely related to the Wilke algebra construction
found in \cite{cruchten:2022:omegaAutomata}. The object part is the same, but
the presentation adapted and given in terms of Wilke algebra congruences.

\begin{center}
  \begin{minipage}{0.60\textwidth}
    \[
    T : \text{Alg}_r(G_1)\to \mathcal{C}
    \] 
    \begin{itemize}
      \item Objects:
        $(\overline{x},\delta_1,\delta_2,\delta_3) \mapsto (\ker \delta_1^\sharp\cap \ker \delta_\circ^\sharp\cap \ker\delta_3^\sharp,\sim)$ where
        $\sim$ is the equivalence relation from above.
      \item Morphisms:
          $h:(\overline{x},X)\twoheadrightarrow (\overline{y},Y) \implies T(\overline{x},X) \subseteq T(\overline{y},Y)$
    \end{itemize}
  \end{minipage}
\end{center}

\begin{lem}
  The functor $T$ is well-defined.
\end{lem}

\begin{proof}
  We show that $(\ker \delta_1^\sharp\cap \ker \delta_\circ^\sharp\cap \ker\delta_3^\sharp,\sim)$
  is a congruence on the free Wilke algebra over $\Sigma$, $\Sigma^{+,\ast +}$.
  For $u,v\in \Sigma^+$, we define $u\sim_\delta v :\iff (u,v)\in \ker \delta_1^\sharp\cap \ker \delta_\circ^\sharp\cap \ker\delta_3^\sharp$.
  Let $u\sim_\delta u'$ and $v\sim_\delta v'$. We have to show that
  $uv\sim_\delta u'v'$ but restrict our proof to showing that
  $(uv,u'v')\in\ker\delta_\circ$ as the other cases are similar. Let
  $x\in X_1$, then
  \[
    \delta_\circ(x,uv) = \delta_\circ(\delta_1(x,u),v) = \delta_\circ(\delta_1(x,u'),v')=\delta_\circ(x,u'v').
  \] 
  For $(-)^\omega$, let $u\sim_\delta v$ so that for all $x\in X_1$ we have
  $\delta_\circ(x,u)=\delta_\circ(x,v)$. We have to show that
  $u^\omega\sim v^\omega$. Let $x\in X_1$ and $c\in \text{Adm}(X_2)$. Then
   \[
  \delta(x,(\epsilon,u))\in c \iff \delta_\circ(x,u)\in c \iff \delta_\circ(x,v)\in c \iff \delta(x,(\epsilon,v))\in c.
  \] 
  For the mixed multiplication, let $u\sim_\delta u'$ and
  $vw^\omega\sim v'w'^\omega$. For $x\in X_1$ and $c\in \text{Adm}(X_2)$ we have
  that
   \[
  \delta(x,(uv,w))\in c \iff \delta(\delta_1(x,u),(v,w))\in c \iff \delta(\delta_1(x,u'),(v',w'))\in c \iff \delta(x,(u'v',w'))\in c.
  \] 
  Finally, the pumping and rotation law hold trivially as we are working
  with admissible sets. 

  Next let $h:(\overline{x},X)\twoheadrightarrow (\overline{y},Y)$. We claim
  that $T(\overline{x},X) \subseteq T(\overline{y},Y)$. This is easy to see
  for  $\ker\delta_1$, $\ker\delta_\circ$ and $\ker\delta_3$, so we only show
  it for $\sim$. In order to do so, we make a small remark about admissible
  sets. If $c\in \text{Adm}(Y_2)$, then $h^\ast c=\{x\in X_2\mid h_2(x)\in c\}\in \text{Adm}(X_2)$.
  This is shown using reachability and is straight-forward. With this, let
  $uv^\omega\sim_X u'v'^\omega$,  $y\in Y_1$, $x\in X_1$ with $h_1(x)=y$ and
  $c\in \text{Adm}(Y_2)$. Then
  \[
    \delta(y,(u,v))\in c \iff \delta(x,(u,v))\in h^\ast c \iff \delta(x,(u',v'))\in h^\ast c \iff \delta(y,(u',v'))\in c.
  \] Hence $uv^\omega \sim_Y u'v'^\omega$.
\end{proof}

In order to show that $T$ is a right adjoint, we do not explicitly construct
the machine functor, but instead apply the adjoint functor theorem for
preorders, for which we only have to show that $\text{Alg}_r(G_1)$ has
arbitrary meets, and that $T$ preserves them.

\begin{prop}
  The category $\text{Alg}_r(G_1)$ is complete.
\end{prop}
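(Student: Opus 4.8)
The plan is to exploit that $\text{Alg}_r(G_1)$ is thin (shown above): in a thin category a limit of any small diagram is just the meet of the objects appearing in it, so completeness reduces to producing arbitrary small meets. I would obtain these by identifying $\text{Alg}_r(G_1)$, as a poset, with the lattice of $G_1$-congruences on the initial $G_1$-algebra $I=(\Sigma^\ast,\Sigma^{\ast +})$ ordered by inclusion, and then noting that this lattice is complete.

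For the identification, first I would note that a pointed $G_1$-algebra $A$ is reachable exactly when the unique homomorphism $!_A\colon I\to A$ is surjective: since being a $G_1$-morphism that preserves the initial state forces $!_A$ to send $w\in\Sigma^\ast$ to $\delta_1(\overline{x})(w)$ and $(u,v)\in\Sigma^{\ast +}$ to $\delta(\overline{x})(u,v)$, its image is precisely the part of $A$ generated by the initial state. Hence every object of $\text{Alg}_r(G_1)$ is isomorphic to a quotient $I/E$ for a $G_1$-congruence $E$ on $I$, and conversely every such quotient is reachable. Moreover, since $I$ is initial, any morphism $I/E\to I/E'$ in $\text{Alg}_r(G_1)$ commutes with the two quotient maps, so such a morphism exists (and is then unique, by thinness) iff $E\subseteq E'$. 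This gives the claimed poset isomorphism, consistent with the earlier observation that morphisms of $\text{Alg}_r(G_1)$ are $\text{Set}$-surjections.

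It then remains to observe that $G_1$-congruences on $I$ under inclusion have all small meets: an arbitrary intersection of equivalence relations that are each closed under the $G_1$-operations — equivalently, each a sub-$G_1$-algebra of $I\times I$ — is again such, and the empty meet is the full relation $I\times I$, corresponding to the terminal object $(1,1)$ of $\text{Alg}_r(G_1)$. Translating back, the meet of a family $\{A_i\}_{i\in I}$ is $I/\bigcap_i \ker(!_{A_i})$; equivalently, and more concretely, it is the reachable part of the product $\prod_i A_i$ formed in $\text{Alg}(G_1)$ (limits in $\text{Alg}(G_1)$ being created from $\text{Set}^2$), with the restricted projections as the cone: a reachable $B$ equipped with morphisms $B\to A_i$ induces a $G_1$-morphism $B\to\prod_i A_i$ whose image, being the image of the reachable algebra $B$, is exactly that reachable part, so the map factors through it; the factorisation and cone conditions are then automatic by thinness. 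So $\text{Alg}_r(G_1)$ has all small meets, i.e. all small limits.

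The step I expect to require the most care is the identification of reachability of a $G_1$-algebra with being generated by the initial state: this is what makes reachable algebras precisely the quotients of $I$, matches morphisms of $\text{Alg}_r(G_1)$ with congruence inclusions, and guarantees that images of reachable algebras are reachable (so the factorisation in the last step goes through). The remaining ingredients — intersections of congruences are congruences, limits in $\text{Alg}(G_1)$ are created from the base category, and a complete lattice is a complete category — are routine.
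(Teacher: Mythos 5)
Your proof is correct, and its concrete output coincides with the paper's, but you reach it by a different route. The paper's proof is a direct construction: since the category is thin, it suffices to exhibit arbitrary meets, and these are given by the one-state automaton (empty meet) and by the reachable part of the pointwise product with the tuple of initial states (non-empty meets); the universal property is left implicit. You instead first identify $\text{Alg}_r(G_1)$, via surjectivity of the unique map from the initial algebra $(\Sigma^\ast,\Sigma^{\ast+})$, with the lattice of $G_1$-congruences on that initial algebra ordered by inclusion, and then get all meets for free from closure of congruences under arbitrary intersection (with the full relation as empty meet, i.e.\ the terminal $(1,1)$). Translating back, $I/\bigcap_i\ker(!_{A_i})$ is exactly the reachable part of $\prod_i A_i$, so the two constructions agree. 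What your detour buys is a structural reason for completeness and an essentially automatic verification of the universal property (your observation that the image of a reachable $B$ in the product is precisely the reachable part is the step the paper glosses over); what the paper's version buys is brevity and a description that is immediately usable in the subsequent proof that $T$ preserves meets. Both are sound; your identification of reachability with surjectivity of the reachability map is exactly the notion the paper uses (it is what drives its thinness lemma), so that hinge point is unproblematic.
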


\begin{proof}
  As the category $\text{Alg}_r(G_1)$ is a preorder, we show that it has
  arbitrary meets. The empty meet is just the singleton set $1$ with trivial
  transitions and the only possible initial state. Given a set of reachable $G_1$ algebras
  $\{(X_{i,1},X_{i,2},\overline{x_i},\delta_1^i,\delta_2^i,\delta_3^i)\}_{i\in I}$,
  we get its meet by taking the product of the state spaces, defining transitions
  pointwise and the initial state as the tuple consisting of initial states $\overline{x_i}$, and
  then taking the reachable part.
\end{proof}

\begin{prop}
  The functor $T$ preserves arbitrary meets.
\end{prop}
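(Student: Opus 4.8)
The plan is to make both meets explicit and reduce the statement to a componentwise comparison of congruences. By the previous proposition, the meet $\bigwedge_{i\in I}X_i$ in $\text{Alg}_r(G_1)$ is the reachable part $\langle\prod_{i\in I}X_i\rangle$ of the product lasso automaton, with transitions computed pointwise and initial state $(\overline{x_i})_{i\in I}$; moreover the projections $\pi_i\colon\langle\prod_{i}X_i\rangle\twoheadrightarrow X_i$ are $G_1$-morphisms, and they are surjective precisely because each $X_i$ is reachable. On the target side, since the morphisms of $\mathcal{C}$ are inclusions, the meet of a family of Wilke algebra congruences is just their componentwise intersection, so what has to be shown is the equality of congruences $T\big(\langle\prod_{i}X_i\rangle\big)=\bigcap_{i\in I}T(X_i)$, sort by sort. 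One inclusion is immediate from functoriality: $T$ is monotone, so applying it to each $\pi_i$ gives $T\big(\langle\prod_{i}X_i\rangle\big)\subseteq T(X_i)$ for all $i$, hence $T\big(\langle\prod_{i}X_i\rangle\big)\subseteq\bigcap_{i}T(X_i)$.

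For the reverse inclusion I would treat the word sort and the (ultimately periodic) lasso sort separately. On the word sort the relation attached by $T$ is $\ker\delta_1^\sharp\cap\ker\delta_\circ^\sharp\cap\ker\delta_3^\sharp$. Since the transitions of $\langle\prod_{i}X_i\rangle$ are pointwise and, by reachability, every state of $\langle\prod_{i}X_i\rangle$ is a tuple $\big(\delta_{1,i}(\overline{x_i})(w)\big)_{i}$ (and similarly in the second sort), a word $w$ acts on $\langle\prod_{i}X_i\rangle$ exactly as a word $w'$ iff it does so on every $X_i$; hence $\ker\delta_1^\sharp=\bigcap_i\ker\delta_{1,i}^\sharp$ for the product, and likewise for $\delta_\circ^\sharp$ and $\delta_3^\sharp$. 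Intersecting, the word-sort component of $\bigcap_i T(X_i)$ equals that of $T\big(\langle\prod_{i}X_i\rangle\big)$. This step is routine unravelling of definitions.

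The genuinely delicate part, which I expect to be the main obstacle, is the lasso-sort component: proving $\bigcap_{i}{\sim_{X_i}}\subseteq{\sim_{\langle\prod_{i}X_i\rangle}}$. Here I would first record the auxiliary fact that, for any reachable pointed lasso automaton $X$, the admissible subsets of $X_2$ are exactly the unions of classes of the equivalence relation on $X_2$ generated by the pairs $\big(\delta(\overline x,(p,q)),\,\delta(\overline x,(p',q'))\big)$ with $(p,q)\sim_\gamma(p',q')$; equivalently $\sim_X$ is controlled entirely by this generated equivalence applied to the destinations $\delta(x,(u,v))$, $\delta(x,(u',v'))$ over all states $x$. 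Combined with the facts already used for the morphism part of $T$ — preimages of admissible sets along surjective $G_1$-morphisms are admissible, and admissibility is closed under Boolean operations — one obtains $\pi_i^{-1}\big(\text{Adm}((X_i)_2)\big)\subseteq\text{Adm}\big((\langle\prod_i X_i\rangle)_2\big)$, which is the ``easy'' half. The hard half is that, on the tuples that actually occur in the reachable product, the generated equivalence for $\langle\prod_i X_i\rangle$ is no finer than the coordinatewise one: a zigzag of $\gamma$-equivalent lasso representatives witnessing the equivalence in each coordinate must be recombined into one that is simultaneously valid in all coordinates. Since a zigzag in a single factor may re-represent a state via a lasso that is \emph{not} $\gamma$-equivalent to the previous one (only reaching the same state there), this recombination is not automatic; it has to be driven by reachability of the product — every occurring tuple admits a common lasso representative, so all coordinates read the same word — together with the concrete shape of $\sim_\gamma$ (stability of $uv^\omega=u'v'^\omega$ under prepending words, which is what lets the quantifier over states $x$ be absorbed into the lasso). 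I expect this bookkeeping to be the only substantive difficulty; once $\sim_{\langle\prod_i X_i\rangle}=\bigcap_i{\sim_{X_i}}$ is established, combining it with the word-sort computation yields $\bigcap_i T(X_i)\subseteq T\big(\langle\prod_i X_i\rangle\big)$ and hence the desired equality.
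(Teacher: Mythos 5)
Your setup, the reduction to $T\big(\langle\prod_i X_i\rangle\big)=\bigcap_i T(X_i)$, the easy inclusion via monotonicity of $T$ along the surjective projections, and the word-sort computation all match the paper and are fine. The problem is the second sort: you correctly isolate the inclusion $\bigcap_i\sim_{X_i}\subseteq\sim_{\langle\prod_i X_i\rangle}$ as the substantive point, reformulate it (via your auxiliary fact that admissible sets are exactly unions of classes of the equivalence generated by destinations of $\gamma$-equivalent lassos) as the statement that the generated equivalence on the reachable product is no finer than the coordinatewise one, and then explicitly leave that statement unproved: ``a zigzag \ldots must be recombined \ldots I expect this bookkeeping to be the only substantive difficulty.'' That recombination \emph{is} the proposition; as you yourself note, a zigzag in one factor may pass through lassos that merely reach the same state in that factor without being $\gamma$-equivalent and without reaching the same tuple, so an intersection of joins of equivalence relations need not equal the join of the intersection, and nothing in your sketch shows why it does here. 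So the proposal has a genuine gap precisely at the step that carries all the content.

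For comparison, the paper does not argue at the level of generated equivalences and zigzags at all. It works directly with admissible sets and the projections: (i) it observes that for $c\in\text{Adm}(X_2)$ of the (reachable) product, membership of a reachable tuple $\delta(\vec{x},(u,v))$ in $c$ is controlled componentwise through the sets $\pi_i(c)$, using that admissible sets cannot separate destinations of $\gamma$-equivalent lassos; and (ii) it notes that every $c_i\in\text{Adm}(X_{i,2})$ arises as $\pi_i(c)$ for an admissible cylinder $c=(c_i\times\prod_{j\neq i}X_{j,2})\cap X_2$. Combining these with reachability (every state of the product is $\delta_1(\vec{\overline{x}},w)$, and its $i$-th component is $\delta_{1,i}(\overline{x_i},w)$), the paper converts the quantification ``for all reachable tuples and all admissible $c$ of the product'' into ``for all $i$, all words $w$, and all admissible $c_i$ of $X_i$'', which is exactly $\bigcap_i\sim_{X_i}$. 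Your ``easy half'' ($\pi_i^{-1}$ of an admissible set is admissible) is the cylinder direction (ii), but it only yields the inclusion you already have from monotonicity; to close the hard direction you need an analogue of (i), i.e.\ a statement tying membership in admissible sets of the product to admissible-set data in the factors, rather than a coordinatewise zigzag recombination.
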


\begin{proof}
  Let $\{(X_{i,1},X_{i,2},\overline{x_i},\delta_1^i,\delta_2^i,\delta_3^i)\}_{i\in I}$
  be a set of reachable $G_1$ algebras with meet $X=\Pi \{X_i\}_{i\in I}$.
  
  Firstly, the meets in $\mathcal{C}$ are given by intersection of congruences,
  which are well-defined. We now have to show that
  \[
    T(X)=\bigcap_{i\in I}T(X_i).
  \] 
  This is straight-forward for the first sort, i.e. we have
  \[
    \ker\delta_{X,1}\cap \ker\delta_{X,\circ}\cap\ker\delta_{X,3} = \bigcap_{i\in I}\ker\delta_{X_i,1}\cap \ker\delta_{X_i,\circ}\cap\ker\delta_{X_i,3}.
  \] 
  For $u,v\in \Sigma^+$ we have
  \begin{align*}
    u\sim_{\delta,X}v &\iff \forall \vec{x}, \forall {\Box} \in \{1,\circ,3\} : \delta_{X,\Box}(x,u)=\delta_{X,\Box}(x,v) \\
                      &\iff \forall i\in I, \forall x_i, \forall {\Box} \in \{1,\circ,3\} : \delta_{X_i,\Box}(x_i,u)=\delta_{X_i,\Box}(x_i,v) \\
                      &\iff \forall i\in I: u\sim_{\delta,X_i} v.
  \end{align*}
  For the second sort, we need the following observation. For all $c\in \text{Adm}(X_2)$
  and $(u,v)\sim_\gamma (u',v')$:
  \begin{align*}
    \forall i\in I: \delta_i(x_i,(u,v)) \in \pi_i(c) &\iff \delta(\vec{x},(u,v))\in c \\
                                                     &\iff \delta(\vec{x},(u',v'))\in c \\
                                                     &\iff \forall i\in I: \delta_i(x_i,(u',v')) \in \pi_i(c).
  \end{align*} Moreover, for any $c_i\in \text{Adm}(X_{i,2})$ there exists some $c\in \text{Adm}(X_2)$ 
  such that $c_i=\pi_i(c)$ (take for instance the cartesian product of $c_i$ with
  $X_{2,j}$ where $j\not =i$ and intersect with $X_2$).
  For $uv^\omega, u'v'^\omega\in \Sigma^{\text{up}}$, we then have that
  \allowdisplaybreaks{
  \begin{align*}
    uv^\omega\sim_X u'v'^\omega &\iff \forall \vec{x}\in X_1, \forall c\in \text{Adm}(X_2): \delta_X(\vec{x} ,(u,v))\in c \iff \delta_X(\vec{x} ,(u',v'))\in c \\
                                &\iff \forall w\in \Sigma^\ast, \forall c\in \text{Adm}(X_2): \delta_X(\delta_1(\vec{\overline{x}_i} ,w),(u,v))\in c \iff \delta_X(\delta_1(\vec{\overline{x}_i},w) ,(u',v'))\in c \\
                                &\iff \forall i\in I, \forall w\in \Sigma^\ast, \forall c\in \text{Adm}(X_2): \\
                                &\qquad\qquad \delta_{X_i}(\delta_{X_i,1}(\overline{x}_i ,w),(u,v))\in \pi_i(c) \iff \delta_{X_i}(\delta_{X_i,1}(\overline{x}_i ,w),(u',v'))\in \pi_i(c)  \\
                                &\iff \forall i\in I, \forall w\in \Sigma^\ast, \forall c_i\in \text{Adm}(X_{i,2}): \\
                                &\qquad\qquad \delta_{X_i}(\delta_{X_i,1}(\overline{x}_i ,w),(u,v))\in c_i \iff \delta_{X_i}(\delta_{X_i,1}(\overline{x}_i ,w),(u',v'))\in c_i  \\
                                &\iff \forall i\in I, uv^\omega \sim_{X_i} u'v'^\omega.\qedhere
  \end{align*}
}
\end{proof}     

From the adjoint functor theorem for preorders it follows that $T$ is a
right adjoint.

\begin{cor}
  The functor $T$ is a right adjoint.
\end{cor}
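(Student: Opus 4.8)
The plan is to invoke the adjoint functor theorem for preorders, which is available to us precisely because of the two preceding propositions. Recall its statement: a monotone map $T : P \to Q$ out of a complete preorder $P$ admits a left adjoint as soon as it preserves all meets, and in that case the left adjoint is forced to be $M(q) = \bigwedge\{\, p \in P \mid q \le T(p)\,\}$. So there is nothing left to \emph{construct}; the work is to check the hypotheses (done) and unwind the conclusion.

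First I would record that both $\text{Alg}_r(G_1)$ and $\mathcal{C}$ are preorders (thin categories), as shown above, so that an adjunction between them is just a Galois connection: a monotone assignment $M$ on objects suffices, functoriality on morphisms and the triangle identities being automatic in the thin setting, and the adjunction reduces to the single equivalence $M(C) \le X \iff C \subseteq T(X)$. Next, by the proposition that $\text{Alg}_r(G_1)$ is complete it has arbitrary meets, including the empty meet, which is the one-element $G_1$-algebra $1$; and by the proposition that $T$ preserves arbitrary meets we may compute $T$ of such meets as intersections of congruences in $\mathcal{C}$. Defining $M(C) := \bigwedge\{\, X \in \text{Alg}_r(G_1) \mid C \subseteq T(X)\,\}$, meet preservation gives $T(M(C)) = \bigcap\{\, T(X) \mid C \subseteq T(X)\,\} \supseteq C$, i.e.\ the counit inequality $C \subseteq T(M(C))$; conversely, if $C \subseteq T(X)$ then $X$ lies in the family defining $M(C)$, so $M(C) \le X$, and applying this with $C := T(X)$ (using that $X$ itself satisfies $T(X) \subseteq T(X)$) yields the unit inequality $M(T(X)) \le X$. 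Together these say exactly that $M \dashv T$, so $T$ is a right adjoint.

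The only point needing a moment's care is the degenerate case in which $\{\, X \mid C \subseteq T(X)\,\}$ is empty, so that $M(C)$ is the empty meet $1$: one must still have $C \subseteq T(1)$. But $T$ preserves the empty meet, so $T(1)$ is the top of $\mathcal{C}$ — the full congruence collapsing the free Wilke algebra over $\Sigma$ onto the one-element Wilke algebra — and every Wilke algebra congruence is contained in it; in fact this shows the family is never empty. I do not expect any genuine obstacle here: the substantive content has been discharged in the two preceding propositions (completeness of $\text{Alg}_r(G_1)$ and meet preservation by $T$), and the present argument is purely the bookkeeping of the preorder adjoint functor theorem.
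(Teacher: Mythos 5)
Your proposal is correct and follows the paper's own route: the paper likewise deduces the corollary by applying the adjoint functor theorem for preorders to the two preceding propositions (completeness of $\text{Alg}_r(G_1)$ and preservation of arbitrary meets by $T$), without explicitly constructing the machine functor. Your unwinding of that theorem (the meet formula for the left adjoint, the unit/counit inequalities, and the empty-meet case) is just the standard proof of the cited result, so there is no substantive difference.
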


\section{Conclusion}

In this paper, we investigated the well-known transition-machine construction
and showed that in the classical setting, this construction gives rise to
an adjunction. Based on this adjunction, we drew a close link to sets of
equations and coequations. We obtained a comonad which maps an
automaton to the greatest set of equations it satisfies on its reachable part.
We also obtained a monad which maps an automaton to the least preformation of
languages which includes certain languages one can obtain from varying the
initial and final states. These deserve further investigation.

Furthermore, we showed that transition constructions, which form Galois connections,
can also be constructed for lasso and $\Omega$-automata. For lasso automata
in particular, we defined sets of equations and coequations and made links to
the Myhill-Nerode and syntactic congruence of a lasso language.

Our work presents directions for future work such as the
exploration of the link to work on minimisation
\cite{bonchi:2014:algebraCoalgebra,bezhanishvili:2020:minimisation} which
is used in the construction of $\muPL$.

\paragraph{Acknowledgements.} The author would like to thank Harsh Beohar and
Georg Struth for valuable discussions, and also Anton Chernev for the various
discussions specifically on the adjunctions surrounding the transition
constructions for lasso and $\Omega$-automata.

\bibliographystyle{plain} 
\bibliography{references}

\begin{thebibliography}{1}

\bibitem{ballester:2015:dualEquivalenceOfEquationsAndCoequations}
Adolfo Ballester{-}Bolinches, Enric Cosme{-}Ll{\'{o}}pez, and Jan J. M.~M.
  Rutten.
\newblock The dual equivalence of equations and coequations for automata.
\newblock {\em Inf. Comput.}, 244:49--75, 2015.

\bibitem{bezhanishvili:2020:minimisation}
Nick Bezhanishvili, Marcello~M. Bonsangue, Helle~Hvid Hansen, Dexter Kozen,
  Clemens Kupke, Prakash Panangaden, and Alexandra Silva.
\newblock Minimisation in logical form.
\newblock {\em CoRR}, abs/2005.11551, 2020.

\bibitem{bonchi:2014:algebraCoalgebra}
Filippo Bonchi, Marcello~M. Bonsangue, Helle~Hvid Hansen, Prakash Panangaden,
  Jan J. M.~M. Rutten, and Alexandra Silva.
\newblock Algebra-coalgebra duality in {B}rzozowski's minimization algorithm.
\newblock {\em {ACM} Trans. Comput. Log.}, 15(1):3:1--3:29, 2014.

\bibitem{calbrix:1994:ultimatelyPeriodicWords}
Hugues Calbrix, Maurice Nivat, and Andreas Podelski.
\newblock Ultimately periodic words of rational $\omega$-languages.
\newblock In Stephen~D. Brookes, Michael~G. Main, Austin Melton, Michael~W.
  Mislove, and David~A. Schmidt, editors, {\em Mathematical Foundations of
  Programming Semantics, 9th International Conference, New Orleans, LA, USA,
  April 7-10, 1993, Proceedings}, volume 802 of {\em Lecture Notes in Computer
  Science}, pages 554--566. Springer, 1993.

\bibitem{ciancia:2019:omegaAutomata}
Vincenzo Ciancia and Yde Venema.
\newblock Omega-automata: {A} coalgebraic perspective on regular
  omega-languages.
\newblock In Markus Roggenbach and Ana Sokolova, editors, {\em 8th Conference
  on Algebra and Coalgebra in Computer Science {(CALCO)}}, volume 139 of {\em
  LIPIcs}, pages 5:1--5:18. Schloss Dagstuhl - Leibniz-Zentrum f{\"{u}}r
  Informatik, 2019.

\bibitem{cruchten:2022:omegaAutomata}
Mike Cruchten.
\newblock Topics in {$\Omega$}-automata -- {A} journey through lassos, algebra,
  coalgebra and expressions.
\newblock Master's thesis, The University of Amsterdam, June 2022.

\bibitem{salamanca:2015:equationsAndCoequations}
Julian Salamanca, Marcello~M. Bonsangue, and Jan J. M.~M. Rutten.
\newblock Equations and coequations for weighted automata.
\newblock In Giuseppe~F. Italiano, Giovanni Pighizzini, and Donald Sannella,
  editors, {\em Mathematical Foundations of Computer Science 2015 - 40th
  International Symposium, {MFCS} 2015, Milan, Italy, August 24-28, 2015,
  Proceedings, Part {I}}, volume 9234 of {\em Lecture Notes in Computer
  Science}, pages 444--456. Springer, 2015.

\bibitem{wilke:1993:algebraic}
Thomas Wilke.
\newblock An algebraic theory for regular languages of finite and infinite
  words.
\newblock {\em Int. J. Algebra Comput.}, 3(4):447--490, 1993.

\end{thebibliography}

\end{document}